\definecolor{dark-red}{rgb}{0.4,0.15,0.15}
\definecolor{dark-blue}{rgb}{0.15,0.15,0.4}
\definecolor{medium-blue}{rgb}{0,0,0.5}
\definecolor{gray}{rgb}{0.5,0.5,0.5}
\newtheorem{theorem}{Theorem}
\newtheorem{lemma}[theorem]{Lemma}
\newtheorem{proposition}[theorem]{Proposition}
\newtheorem{corollary}[theorem]{Corollary}
\newtheorem*{thm:main}{Theorem~\ref{thm:main}}
\newcommand\abs[1]{\lvert #1\rvert}
\newcommand\card[1]{\lvert #1 \rvert}
\newcommand\mimw{\operatorname{mimw}}
\newcommand\mimval{\operatorname{mim}}
\newcommand\mim{\operatorname{mim}}
\newcommand\bd{\operatorname{bd}}
\newcommand\cO{\mathcal O}
\newcommand*{\defeq}{\mathrel{\vcenter{\baselineskip0.5ex \lineskiplimit0pt
                     \hbox{\scriptsize.}\hbox{\scriptsize.}}}%
                     =}
\newcommand{\parproblemdef}[4]
{
\begin{quote}
\textsc{#1}\\
\textbf{Input:} #2\\
\textbf{Parameter:} #3\\
\textbf{Question:} #4
\end{quote}
}
\theoremstyle{definition}
\newtheorem{definition}[theorem]{Definition}
\theoremstyle{remark}
\newtheorem{remark}[theorem]{Remark}
\let\plainqed\qedsymbol
\begin{document}

\title{Polynomial-time algorithms for the Longest Induced Path and Induced Disjoint Paths problems on graphs of bounded mim-width\footnote{The work was done while the authors were at Polytechnic University of Valencia, Spain.}}

\author[1]{Lars Jaffke\thanks{Supported by the Bergen Research Foundation (BFS).}}

\author[2]{O-joung Kwon\thanks{Supported by the European Research Council (ERC) under the European Union's Horizon 2020 research and innovation programme (ERC consolidator grant DISTRUCT, agreement No. 648527).}}

\author[1]{Jan Arne Telle}

\affil[1]{Department of Informatics, University of Bergen, Norway. \protect\\ \texttt{\{lars.jaffke, jan.arne.telle\}@uib.no}}
\affil[2]{Logic and Semantics, Technische Universit\"at Berlin, Berlin, Germany. \protect\\ \texttt{ojoungkwon@gmail.com}}

\date\today

\maketitle

\begin{abstract}
	We give the first polynomial-time algorithms on graphs of bounded {\em maximum induced matching width} (mim-width) for problems that are not locally checkable. In particular, we give $n^{\cO(w)}$-time algorithms on graphs of mim-width at most $w$, when given a decomposition, for the following problems: \textsc{Longest Induced Path}, \textsc{Induced Disjoint Paths} and \textsc{$H$-Induced Topological Minor} for fixed $H$. 
	Our results imply that the following graph classes have polynomial-time algorithms for these three problems: \textsc{Interval} and \textsc{Bi-Interval} graphs, \textsc{Circular Arc}, \textsc{Permutation} and \textsc{Circular Permutation} graphs, \textsc{Convex} graphs, \textsc{$k$-Trapezoid},  \textsc{Circular $k$-Trapezoid}, \textsc{$k$-Polygon}, \textsc{Dilworth-$k$} and \textsc{Co-$k$-Degenerate} graphs for fixed $k$.
\end{abstract}

\section{Introduction}
Ever since the definition of the tree-width of graphs emerged from the Graph Minors project of Robertson and Seymour, bounded-width structural graph decompositions have been a successful tool in designing fast algorithms for graph classes on which the corresponding width-measure is small. Over the past few decades, many more width-measures have been introduced, see e.g.~\cite{HOSG2006} for an excellent survey and motivation for width-parameters of graphs. In 2012, Vatshelle~\cite{VatshelleThesis} defined the {\em maximum induced matching width} (mim-width for short) which measures how easy it is to decompose a graph along vertex cuts with bounded maximum induced matching size on the bipartite graph induced by edges crossing the cut. One interesting aspect of this width-measure is that its modeling power is much stronger than tree-width and clique-width and many well-known and deeply studied graph classes such as \textsc{Interval} graphs and \textsc{Permutation} graphs have (linear) mim-width $1$, with decompositions that can be found in polynomial time~\cite{BelmonteV2013,VatshelleThesis}, while their clique-width can be proportional to the square root of the number of vertices. Hence, designing an algorithm for a problem $\Pi$ that runs in $\XP$ time parameterized by mim-width yields polynomial-time algorithms for $\Pi$ on several interesting graph classes at once. 

For \textsc{Locally Checkable Vertex Subset and Vertex Partitioning} (LC-VSVP) problems, a class introduced~\cite{TelleP1997} to capture many well-studied algorithmic problems in a unified framework, Belmonte and Vatshelle~\cite{BelmonteV2013} and Bui-Xuan et al.~\cite{Bui-XuanTV13} provided $\XP$-algorithms on graphs of bounded mim-width. LC-VSVP problems include many NP-hard problems such as \textsc{Maximum Independent Set}, \textsc{Minimum Dominating Set}, and \textsc{$q$-Coloring}. A common feature of these problems is that they (as the name suggests) can be checked locally: Take \textsc{$q$-Coloring} for example. Here, we want to determine whether there is a $q$-partition of the vertex set of an input graph such that each part induces an independent set. The latter property can be checked individually for each vertex by inspecting only its direct neighborhood.

Until now, the only problems known to be $\XP$-time solvable on graphs of bounded mim-width were of the type LC-VSVP. It is therefore natural to ask whether similar results can be shown for problems concerning graph properties that are {\em not} locally checkable. 
In this paper, we mainly study problems related to finding {\em induced} paths in graphs, namely \textsc{Longest Induced Path}, \textsc{Induced Disjoint Paths} and \textsc{$H$-Induced Topological Minor}. 
Although their `non-induced' counterparts are more deeply studied in the literature, also these induced variants have received considerable attention. Below, we briefly survey results for exact algorithms to these three problems on graph classes studied so far.

For the first problem, Gavril~\cite{Gavril2002} showed that \textsc{Longest Induced Path} can be solved in polynomial time for graphs without induced cycles of length at least $q$ for fixed $q$ (the running time was improved by Ishizeki et al.~\cite{Ishizeki2008}), while Kratsch et al.~\cite{KratschMT03} solved the problem on AT-free graphs in polynomial time. Kang et al.~\cite{KangKST2017} recently showed that those classes have unbounded mim-width. However, graphs of bounded mim-width are not necessarily graphs without cycles of length at least $k$ or AT-free graphs.
The second problem derives from the well-known
\textsc{Disjoint Paths} problem which is solvable in $O(n^3)$ time if the number of paths $k$ is a fixed constant, as shown by Robertson and Seymour~\cite{RobertsonS95b}, while if $k$ is part of the input it is NP-complete on graphs of linear mim-width 1 (interval graphs)~\cite{Natarajan}.
In contrast, \textsc{Induced Disjoint Paths} is NP-complete already for $k=2$ paths~\cite{KawarabayashiK2008}.
In this paper we consider the number of paths $k$ as part of the input.
Under this restriction \textsc{Induced Disjoint Paths} is NP-complete on claw-free graphs, as shown by Fiala et al.~\cite{Fiala12}, while Golovach et al.~\cite{GolovachPL16} gave a linear-time algorithm for circular-arc graphs. 
For the third problem, \textsc{$H$-Induced Topological Minor}, we consider $H$ to be a fixed graph.
This problem, and also \textsc{Induced Disjoint Paths}, were both shown solvable in polynomial time on chordal graphs by Belmonte et al.~\cite{Belmonte14}, and on AT-free graphs by Golovach et al.~\cite{GolovachPL12}.

We show that \textsc{Longest Induced Path}, \textsc{Induced Disjoint Paths} and \textsc{$H$-Induced Topological Minor} for fixed $H$ can be solved in time $n^{\cO(w)}$ given a branch decomposition of mim-width $w$. Since bounded mim-width decompositions, usually mim-width 1 or 2, can be computed in polynomial-time for all well-known graph classes having bounded mim-width~\cite{BelmonteV2013}, our results thus provide unified polynomial-time algorithms for these problems on the following classes of graphs: \textsc{Interval} and \textsc{Bi-Interval} graphs, \textsc{Circular Arc}, \textsc{Permutation} and \textsc{Circular Permutation} graphs, \textsc{Convex} graphs, \textsc{$k$-Trapezoid},  \textsc{Circular $k$-Trapezoid}, \textsc{$k$-Polygon}, \textsc{Dilworth-$k$} and \textsc{Co-$k$-Degenerate} graphs for fixed $k$, all graph classes of bounded mim-width~\cite{BelmonteV2013}.\footnote{In~\cite{BelmonteV2013}, results are stated in terms of $d$-neighborhood equivalence, but in the proof, they actually gave a bound on mim-width or linear mim-width. Note that \textsc{Interval}, \textsc{Permutation} and \textsc{$k$-Trapezoid} graphs are AT-free, so polynomial-time algorithms for all three problems were already known~\cite{GolovachPL12}.}

The problem of computing the mim-width of general graphs was shown to be $\W$[1]-hard~\cite{SV16} and no algorithm for computing the mim-width of a graph in $\XP$ time is known. Furthermore, there is no polynomial-time constant-factor approximation for mim-width unless $\NP=\ZPP$~\cite{SV16}.

What makes our algorithms work is an analysis of the structure induced by a solution to the problem on a cut in the branch decomposition. 
There are two ingredients. First, in all the problems we investigate, we are able to show that for each cut induced by an edge of the given branch-decomposition of an input graph, it is sufficient to consider induced subgraphs of size at most $\cO(w)$ as intersections of solutions and the set of edges crossing the cut, where $w$ is the mim-width of the branch decomposition. For instance, in the \textsc{Longest Induced Paths} problem, an induced path is a target solution, and Figure~\ref{fig:lip:solution:intersection} describes a situation where an induced path crosses a cut. We argue that an induced path cannot cross a cut many times if there is no large induced matching between vertex sets $A$ and $B$ of the cut $(A,B)$.  Such an intersection is always an induced disjoint union of paths. Thus, we enumerate all subgraphs of size at most $\cO(w)$, which are induced disjoint unions of paths, and these will be used as indices of our table.

However, a difficulty arises if we recursively ask for a given cut and such an intersection of size at most $\cO(w)$, whether there is an induced disjoint union of paths of certain size in the union of one part and edges crossing the cut, whose intersection on the crossing edges is the given subgraph. The reason is that there are unbounded number of vertices in each part of the cut that are not contained in the given subgraph of size $\mathcal{O}(w)$ but still have neighbors in the other part. We need to control these vertices in such a way that they do not further create an edge in the solution. We control these vertices using vertex covers of the bipartite graph induced by edges crossing the cut. Roughly speaking, if there is a valid partial solution, then there is a vertex cover of such a bipartite graph, which meets all other edges not contained in the given subgraph. The point is that there are only $n^{\mathcal{O}(w)}$ many minimal vertex covers of such a bipartite graph with maximum matching size $w$. We discuss this property in Section~\ref{sec:preliminaries}. Based on these two results, each table will consist of a subgraph of size $\mathcal{O}(w)$ and a vertex cover of the remaining part of the bipartite graph, and we check whether there is a (valid) partial solution to the problem with respect to given information. We can argue that we need to store at most $n^{\cO(w)}$ table entries in the resulting dynamic programming scheme and that each of them can be computed in time $n^{\cO(w)}$ as well.

The strategy for \textsc{Induced Disjoint Paths} is very similar to the one for \textsc{Longest Induced Path}. The only thing to additionally consider is that in the disjoint union of paths, which is guessed as the intersection of a partial solution and edges crossing a cut, we need to remember which path is a subpath of the path connecting a given pair. We lastly provide a one-to-many reduction from \textsc{$H$-Induced Topological Minor} to \textsc{Induced Disjoint Paths}, that runs in polynomial time, and show that it can be solved in time $n^{\mathcal{O}(w)}$. Similar reductions have been shown earlier (see e.g.~\cite{Belmonte14, GolovachPL12}) but we include it here for completeness.

\newcommand{\components}{\mathcal{C}}
\newcommand{\component}[1]{C_{#1}}
\newcommand{\componentG}{\component{G}}	

\section{Preliminaries}\label{sec:preliminaries}
We assume the reader to be familiar with the basic notions in graph theory and parameterized complexity and refer to~\cite{CyganFKLMPPS15,Diestel2010,DowneyF13} for an introduction.

For integers $a$ and $b$ with $a \le b$, we let $[a..b] \defeq \{a, a+1, \ldots, b\}$ and if $a$ is positive, we define $[a] \defeq [1..a]$. 
Throughout the paper, a graph $G$ on vertices $V(G)$ and edges $E(G) \subseteq {V(G) \choose 2}$ is assumed to be finite, undirected and simple. For graphs $G$ and $H$ we say that $G$ is a {\em subgraph} of $H$, if $V(G) \subseteq V(H)$ and $E(G) \subseteq E(H)$ and we write $G \subseteq H$. For a vertex set $X \subseteq V(G)$, we denote by $G[X]$ the subgraph {\em induced} by $X$, i.e.\ $G[X] \defeq (X, E(G) \cap {X \choose 2})$. If $H \subseteq G$ and $X \subseteq V(G)$ then we let $H[X] \defeq H[X \cap V(H)]$.
We use the shorthand $G - X$ for $G[V(G) \setminus X]$. For two (disjoint) vertex sets $X, Y \subseteq V(G)$, we denote by $G[X, Y]$ the bipartite subgraph of $G$ with bipartition $(X, Y)$ such that for $x\in X, y\in Y$, $x$ and $y$ are adjacent in $G$ if and only if they are adjacent in $G[X,Y]$. A {\em cut} of $G$ is a bipartition $(A, B)$ of its vertex set.
For a vertex $v \in V(G)$, we denote by $N[v]$ the set of {\em neighbors} of $v$ in $G$, i.e.\ $N[v] \defeq \{w \in V(G) \mid \{v, w\} \in E(G)\}$.
A set $M$ of edges is a \emph{matching} if no two edges in $M$ share an end vertex, and a matching $\{a_1b_1, \ldots, a_kb_k\}$ is  \emph{induced} if there are no other edges in the subgraph induced by $\{a_1, b_1, \ldots, a_k, b_k\}$. 
For an edge $e =\{v, w\} \in E(G)$, the operation of {\em contracting} $e$ is to remove the edge $e$ from $G$ and identifying $v$ and $w$.

\newcommand*{\dectree}{T}
\newcommand*{\decf}{\mathcal{L}}
\newcommand{\crossinggraph}[1]{G_{#1, \bar{#1}}}
\newcommand{\crossinggraphAB}[2]{G_{#1, #2}}	

\subsection{Branch Decompositions and Mim-Width}
A pair $(\dectree, \decf)$ of a subcubic tree $\dectree$ and a bijection $\decf$ from $V(G)$ to the set of leaves of $\dectree$ is called a \emph{branch decomposition}.
For each edge $e$ of $\dectree$, 
let $\dectree^e_1$ and $\dectree^e_2$ be the two connected components of $\dectree-e$, and 
let $(A^e_1, A^e_2)$ be the vertex bipartition of $G$ such that for each $i\in \{1,2\}$, 
$A^e_i$ is the set of all vertices in $G$ mapped to leaves contained in $\dectree^e_i$ by $\decf$. 
The {\em mim-width of $(\dectree, \decf)$}, denoted by $\mimw(\dectree, \decf)$, is defined as $\max_{e \in E(\dectree)} \mimval(A^e_1)$, where for a vertex set $A \subseteq V(G)$, $\mimval(A)$ denotes the maximum size of an induced matching in $G[A, V(G) \setminus A]$. 
The minimum mim-width over all branch decompositions of $G$ is called the {\em mim-width of $G$}.
If $\abs{V(G)}\le 1$, then $G$ does not admit a branch decomposition, and the mim-width of $G$ is defined to be $0$.

To avoid confusion, we refer to elements in $V(T)$ as {\em nodes} and elements in $V(G)$ as {\em vertices} throughout the rest of the paper.
Given a branch decomposition, one can subdivide an arbitrary edge and let the newly created vertex be the root of $\dectree$, in the following denoted by $r$. Throughout the following we assume that each branch decomposition has a root node of degree two. 
For two nodes $t, t' \in V(T)$, we say that $t'$ is a {\em descendant} of $t$ if $t$ lies on the path from $r$ to $t'$ in $T$.
For $t \in V(\dectree)$, we denote by $G_t$ the subgraph induced by all vertices that are mapped to a leaf that is a descendant of $t$, i.e.\ $G_t \defeq G[X_t]$, where $X_t = \{v \in V(G) \mid \decf^{-1}(t') = v \mbox{ where } t' \mbox{ is a descendant of $t$ in $\dectree$}\}$. We use the shorthand `$V_t$' for `$V(G_t)$' and let $\bar{V_t} \defeq V(G) \setminus V_t$.

The following definitions which we relate to branch decompositions of graphs will play a central role in the design of the algorithms in Section \ref{sec:alg}.

\begin{definition}[Boundary]
	Let $G$ be a graph and $A, B \subseteq V(G)$ such that $A \cap B = \emptyset$. We let $\bd_B(A)$ be the set of vertices in $A$ that have a neighbor in $B$, i.e.\ $\bd_B(A) \defeq \{v \in V(A) \mid N(v) \cap B \neq \emptyset\}$. We define $\bd(A) \defeq \bd_{V(G) \setminus A}(A)$ and call $\bd(A)$ the {\em boundary} of $A$ in $G$.
\end{definition}

\begin{definition}[Crossing Graph]
	Let $G$ be a graph and $A, B \subseteq V(G)$. If $A \cap B = \emptyset$, we define the graph $G_{A, B} \defeq G[\bd_B(A), \bd_A(B)]$ to be the {\em crossing graph from $A$ to $B$}.
\end{definition}
	
	If $(\dectree, \decf)$ is a branch decomposition of $G$ and $t_1, t_2 \in V(\dectree)$ such that the crossing graph $\crossinggraphAB{V_{t_1}}{V_{t_2}}$ is defined, we use the shorthand $\crossinggraphAB{t_1}{t_2} \defeq \crossinggraphAB{V_{t_1}}{V_{t_2}}$. We use the analogous shorthand notations $\crossinggraphAB{t_1}{\bar{t_2}} \defeq \crossinggraphAB{V_{t_1}}{\bar{V_{t_2}}}$ and $\crossinggraphAB{\bar{t_1}}{t_2} \defeq \crossinggraphAB{\bar{V_{t_1}}}{V_{t_2}}$ (whenever these graphs are defined). For the frequently arising case when we consider $\crossinggraph{t}$ for some $t \in V(\dectree)$, we refer to this graph as the {\em crossing graph w.r.t.\ $t$}.
	
	We furthermore use the following notation. Let $G$ be a graph, $v \in V(G)$ and $A \subseteq V(G)$. We denote by $N_A[v]$ the set of neighbors of $v$ in $A$, i.e.\ $N_A[v] \defeq N[v] \cap A$. For $X \subseteq V(G)$, we let $N_A[X] \defeq \bigcup_{v \in X} N_A[v]$. If $(\dectree, \decf)$ is a branch decomposition of $G$ and $t \in V(\dectree)$, we use the shorthand notations $N_t[X] \defeq N_{V_t}[X]$ and $N_{\bar{t}}[X] \defeq N_{\bar{V_t}}[X]$.

\newcommand{\nummis}{\mathrm{mis}}
\newcommand{\nummvc}{\mathrm{mvc}}

\subsection{The Minimal Vertex Covers Lemma} 
Let $G$ be a graph. We now prove that given a set $A \subseteq V(G)$, the number of minimal vertex covers in $G[A, V(G) \setminus A]$ is bounded by $n^{\mimval(A)}$. This observation is crucial to argue that we only need to store $n^{\cO(w)}$ entries at each node in the branch decomposition in all algorithms we design, where $w$ is the mim-width of the given branch decomposition.

Notice that the bound on the number can be easily obtained by combining two results, \cite[Lemma 1]{BelmonteV2013} and \cite[Theorem 3.5.5]{VatshelleThesis}; however, an enumeration algorithm is not given explicitly. To be self-contained, we state and prove it here. 

    \begin{corollary}[Minimal Vertex Covers Lemma]\label{cor:mvc:lemma}
		Let $H$ be a bipartite graph on $n$ vertices with a bipartition $(A,B)$. 
		The number of minimal vertex covers of $H$ is at most $n^{\mimval(A)}$, 
		and the set of all minimal vertex covers of $H$ can be enumerated in time $n^{\mathcal{O}(\mimval(A))}$.
    \end{corollary}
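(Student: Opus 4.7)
The plan is to count minimal vertex covers of $H$ by parameterizing them through their intersection with $A$, and then show that each such intersection is already realized as the $A$-neighborhood of a very small set in $B$.

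First I would invoke the standard bijection between minimal vertex covers of $H$ and maximal independent sets: $C$ is a minimal vertex cover iff $I \defeq V(H)\setminus C$ is a maximal independent set. Writing $I = I_A \cup I_B$ with $I_A \subseteq A$ and $I_B \subseteq B$, maximality together with bipartiteness forces $I_A = A\setminus N(I_B)$ and $I_B = B\setminus N(I_A)$. Consequently the ``$A$-side'' of the cover, $C_A = A\setminus I_A = N(I_B)\cap A$, already determines $C$ entirely, since $I_A = A\setminus C_A$ and then $C_B = N(I_A)\cap B$. Hence the number of minimal vertex covers is bounded by the number of distinct sets of the form $N(X)\cap A$ with $X\subseteq B$.

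Next I would prove the structural lemma that does the real work: every set $N(X)\cap A$ with $X\subseteq B$ equals $N(X')\cap A$ for some $X'\subseteq X$ with $|X'|\le \mimval(A)$. Take $X'\subseteq X$ inclusion-minimal with $N(X')\cap A = N(X)\cap A$. By minimality, for every $v\in X'$ there exists a \emph{private} neighbor $u_v\in A$ adjacent to $v$ but to no other vertex of $X'$. The edges $\{u_v v : v\in X'\}$ then form an induced matching of $H$, so $|X'|\le \mimval(A)$. Combined with the previous paragraph, the number of minimal vertex covers is at most the number of subsets of $B$ of size at most $\mimval(A)$, which is bounded by $n^{\mimval(A)}$.

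For the enumeration algorithm I would iterate over all $X'\subseteq B$ with $|X'|\le \mimval(A)$, set $C_A \defeq N(X')\cap A$, $I_A\defeq A\setminus C_A$, $C_B\defeq N(I_A)\cap B$, and emit $C_A\cup C_B$. A short verification shows the construction always yields the complement of a maximal independent set (the sets $I_A$ and $I_B\defeq B\setminus C_B$ are independent by construction, $X'\subseteq I_B$ witnesses that every vertex of $C_A$ has a neighbor in $I_B$, and every vertex of $C_B$ has a neighbor in $I_A$ by definition); conversely, the previous step guarantees every minimal vertex cover arises from at least one such $X'$. Deduplication gives the full list in time $n^{\cO(\mimval(A))}$.

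The main obstacle is the middle step: the observation that a minimal generator of a neighborhood carries an induced matching of the same size. Once this is in hand, the counting bound and the enumeration routine fall out immediately, and the bijection with maximal independent sets and the closure identities $I_A = A\setminus N(I_B)$, $I_B = B\setminus N(I_A)$ are essentially bookkeeping. A small technical detail worth flagging is that $\mimval(\cdot)$ is symmetric across the bipartition, so the argument could equally well be run by iterating over small subsets of $A$; this is what allows the bound to depend only on $\mimval(A)$ rather than on $\min(|A|,|B|)$.
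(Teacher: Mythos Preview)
Your proof is correct and follows essentially the same approach as the paper: parameterize each minimal vertex cover by a small ``representative'' set whose neighborhood determines one side of the cover, and use the private-neighbor argument to bound the representative's size by $\mimval(A)$. The only cosmetic differences are that you iterate over small subsets of $B$ while the paper iterates over small subsets of $A$ (you yourself note this symmetry), and you prove the private-neighbor/induced-matching lemma directly whereas the paper cites it as \cite[Lemma~1]{BelmonteV2013}.
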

    \begin{proof}
    	Let $w\defeq \mimval(A)$.
    For each vertex set $R\subseteq A$ with $\abs{R}\le w$, 
    let $X_R\subseteq A$ be the set of all vertices having a neighbor in $B\setminus N(R)$.
    We enumerate the sets in 
    \[\mathcal{M}=\{N(R)\cup X_R~\colon R\subseteq A, \abs{R}\le w\}.\]
    Clearly, we can enumerate them in time $n^{\mathcal{O}(w)}$.
    It is not difficult to see that each set in $\mathcal{M}$ is a minimal vertex cover.
    We claim that $\mathcal{M}$ is the set of all minimal vertex covers in $H$.
    
    We use the result by Belmonte and Vatshelle~\cite[Lemma 1]{BelmonteV2013} that
    for a graph $G$ and $A\subseteq V(G)$, $\mim(A)\le k$ if and only if for every $S\subseteq A$, there exists $R\subseteq S$ such that
  $N(R)\cap (V(G)\setminus A)=N(S)\cap (V(G)\setminus A)$ and $\abs{R}\le k$.
    
    Let $U$ be a minimal vertex cover of $H$.
    Clearly, every vertex in $A\setminus U$ has no neighbors in $B\setminus U$, as $U$ is a vertex cover.
    Therefore, by the result of Belmonte and Varshelle, 
    there exists $R\subseteq A\setminus U$ such that $\abs{R}\le w$ and $N(R)\cap B=N(A\setminus U)\cap B=U\cap B$.
    Clearly, $U\cap A=X_R$; if a vertex in $U\cap A$ has no neighbors in $B\setminus U$, then we can remove it from the vertex cover.
    Therefore, $U\in \mathcal{M}$, as required.
    \end{proof}

\section{Algorithms}\label{sec:alg}
In all algorithms presented in this section, we assume that we are given as input an undirected graph $G$ together with a branch decomposition $(\dectree, \decf)$ of $G$ of mim-width $w$, rooted at a degree two vertex obtained from subdividing an arbitrary edge in $\dectree$. We do bottom-up dynamic programming over $(\dectree, \decf)$, starting at the leaves of $\dectree$. 
To obtain our algorithms, we study the structure a solution induces across a cut in the branch decomposition and argue that the size of this structure is bounded by a function only depending on the mim-width. The table entries at each node $t \in V(\dectree)$ are then indexed by all possible such structures and contain the value $1$ if and only if the structure used as the index of this entry constitutes a solution for the respective problem. 
After applying the dynamic programming scheme, the solution to the problem can be obtained by inspecting the table values associated with the root of $\dectree$.

The rest of this section is organized as follows. In Section \ref{sub:sec:alg:LIP} we present an $n^{\cO(w)}$-time algorithm for \textsc{Longest Induced Path}, and in Section \ref{sub:sec:alg:IDP} we give an algorithm for \textsc{Induced Disjoint Paths} with the same asymptotic runtime bound. We give a polynomial-time one-to-many reduction from \textsc{$H$-Induced Topological Minor} (for fixed $H$) to \textsc{Induced Disjoint Paths} in Section \ref{sub:sec:alg:ITM}, yielding an $n^{\cO(w)}$ for the former problem as well.

\subsection{Longest Induced Path}\label{sub:sec:alg:LIP}
For a disjoint union of paths $P$, we refer to its {\em size} as the number of its vertices, i.e.\ $\card{P} \defeq \card{V(P)}$. If $P$ has only one component, we use the terms `size' and `length' interchangeably.
We now give an $n^{\cO(w)}$ time algorithm for the following parameterized problem.
\parproblemdef
	{Longest Induced Path (LIP)/Mim-Width}
	{A graph $G$ with branch decomposition $(\dectree, \decf)$ and an integer $k$}
	{$w \defeq \mimw(\dectree, \decf)$}
	{Does $G$ contain an induced path of length at least $k$?}


Before we describe the algorithm, we observe the following. Let $G$ be a graph and $A \subseteq V(G)$ with $\mim(A) = w$ and let $P$ be an induced path in $G$. Then the subgraph induced by edges of $P$ in $\crossinggraph{A}$ and vertices incident with these edges has size linearly bounded by $w$. The following lemma provides a bound of this size.

\begin{lemma}\label{lem:disjoint:paths:size}
	Let $p$ be a positive integer and let $F$ be a disjoint union of paths such that each component of $F$ contains an edge. If $\abs{V(F)}\ge 4p$, then $F$ contains an induced matching of size at least $p$.
	\end{lemma}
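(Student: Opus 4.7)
The approach is to bound from below the induced matching number of each connected component of $F$ separately and then sum. Since every component contains an edge, each component is a path on some $n_i \ge 2$ vertices, and the plan is to show that such a path alone has an induced matching of size at least $n_i / 4$. Because the components of $F$ are vertex-disjoint and have no edges between them, the union of these per-component induced matchings is still an induced matching, of total size at least $\sum_i n_i / 4 = |V(F)| / 4 \ge p$.

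To handle a single path component $v_1, v_2, \ldots, v_{n_i}$, I would index its edges $e_j \defeq \{v_j, v_{j+1}\}$ for $j \in [1..n_i-1]$ and select the edges $e_1, e_4, e_7, \ldots$, i.e.\ every third edge starting from $e_1$. Any two selected edges have index distance at least three, so their endpoints are at distance at least three along the path and are therefore not joined by any edge of the component; the selection is thus an induced matching. A direct count gives $\lfloor (n_i + 1)/3 \rfloor$ selected edges, and one verifies $\lfloor (n_i + 1)/3 \rfloor \ge n_i / 4$ for all $n_i \ge 2$: this is immediate for $n_i \in \{2, 3\}$ and follows from $(n_i - 1)/3 \ge n_i / 4$ for $n_i \ge 4$.

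I do not expect a real obstacle beyond elementary arithmetic. The only step requiring care is confirming that the per-component bound of $n_i/4$ edges survives the small-component cases, but since every component contributes at least one edge and $1 \ge n_i/4$ for $n_i \le 4$, the bookkeeping is straightforward. With the per-component bound in hand, the final inequality is pure linearity, and the disjointness of the components of $F$ guarantees no cross-component edges can spoil the inducedness of the aggregated matching.
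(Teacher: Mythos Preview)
Your proof is correct and takes a genuinely different route from the paper's. The paper argues by induction on $p$: if some component has at most four vertices, strip it off, apply induction to the remaining $\ge 4(p-1)$ vertices, and add back one edge from the stripped component; otherwise every component has at least five vertices, so one may remove a leaf together with its two nearest path-neighbours, again apply induction, and add the leaf's incident edge.

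Your approach instead bounds each component individually by the explicit construction ``take every third edge'', proving a per-component bound $\lfloor (n_i+1)/3\rfloor \ge n_i/4$ and then summing over components. This is slightly more constructive (it exhibits the matching directly rather than via an inductive existence argument) and in fact yields the marginally stronger statement that $F$ contains an induced matching of size at least $\lceil \lvert V(F)\rvert/4\rceil$. The paper's induction, on the other hand, avoids the small arithmetic case-check you needed for $n_i\in\{2,3,4\}$. Both arguments are entirely elementary and of comparable length.
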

	\begin{proof}
	We prove the lemma by induction on $p$. If $p=1$, then it is clear. We may assume $p\ge 2$. Suppose $F$ contains a connected component $C$ with at most $4$ vertices. Then $F-V(C)$ contains at least $4(p-1)$ vertices, and thus it contains an induced matching of size at least $p-1$ by the induction hypothesis. As $C$ contains an edge, $F$ contains an induced matching of size at least $p$. Thus, we may assume that each component of $F$ contains at least $5$ vertices. Let us choose a leaf $v$ of $F$, and let $v_1$ be the neighbor of $v$, and $v_2$ be the neighbor of $v_1$ other than $v$. Since each component of $F-\{v, v_1, v_2\}$ contains at least one edge, we can apply induction to conclude that $F-\{v,v_1,v_2\}$ contains an induced matching of size at least $p-1$. Together with $vv_1$, $F$ contains an induced matching of size at least $p$.
	\end{proof}
	
\begin{remark}
	Unless stated otherwise, any path $P$ (or equivalently, a component of a disjoint union of paths) we refer to throughout the remainder of this section is considered to be nontrivial, i.e.\ $P$ contains at least one edge.
\end{remark}
	
\newcommand*{\dptable}{\mathcal{T}}
\newcommand*{\bdsubset}{S}
\newcommand*{\minvertexcover}{M}
\newcommand*{\matching}{Q}
\newcommand*{\inducedpaths}{\mathcal{I}}
\newcommand*{\lipsolution}{\mathcal{P}}
\newcommand*{\allbdsubsets}{\mathcal{S}}
\newcommand*{\allminvertexcovers}{\mathcal{M}}
\newcommand*{\allmatchings}{\mathcal{Q}}

\newcommand*{\matchingvertex}{s}
\newcommand*{\matchingvertexx}{t}

\newcommand*{\childA}{a}
\newcommand*{\childB}{b}
\newcommand*{\bdsubsett}{R}
\newcommand*{\allbdsubsetts}{\mathcal{R}}
\newcommand*{\indexshort}{\mathfrak{I}}

\newcommand*{\minvertexcoverr}{\minvertexcover'}
\newcommand*{\matchingg}{\matching'}

\newcommand*{\minvertexcoverA}{\minvertexcover^*_\childA}
\newcommand*{\minvertexcoverB}{\minvertexcover^*_\childB}
\newcommand*{\bdsubsettA}{\bdsubsett^*_\childA}
\newcommand*{\bdsubsettB}{\bdsubsett^*_\childB}

\newcommand*{\matchingA}{\matching^*_\childA}
\newcommand*{\matchingB}{\matching^*_\childB}	

\newcommand*{\matchinggraph}{\mathcal{H}}	

\newcommand*{\mIn}[1]{\minvertexcover_{#1}^{\mathrm{in}}}
\newcommand*{\mOut}[1]{\minvertexcover_{#1}^{\mathrm{out}}}
\newcommand{\mOutAB}[2]{\minvertexcover^{\mathrm{out}\left(#2\right)}_{#1}}

\newcommand*{\bds}{\bdsubset}
\newcommand*{\bdss}{\bdsubsett}
\newcommand*{\mvc}{\minvertexcover}
\newcommand*{\mvcc}{\minvertexcoverr}
\newcommand*{\indp}{\inducedpaths}

\newcommand{\crosg}[1]{\crossinggraph{#1}}
\newcommand{\crosgAB}[2]{\crossinggraphAB{#1}{#2}}

\newcommand{\degonevert}{D}
\newcommand{\symdiff}{\triangle}

\newcommand*{\matvtx}{\matchingvertex}
\newcommand*{\matvtxx}{\matchingvertexx}

\begin{figure}
		\centering
		\includegraphics[height=.2\textheight]{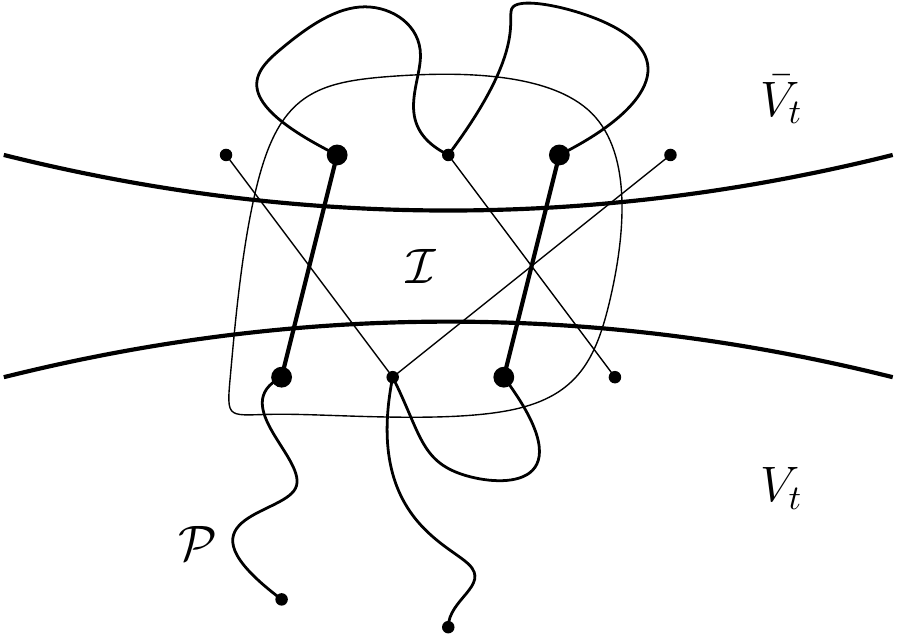}
		\caption{The intersection of an induced path $\mathcal{P}$ with $G[V_t \cup \bd(\bar{V_t})]$, which is an induced disjoint union of paths $\mathcal{I}$. The subgraph $S$ to be used as an index for the corresponding table entry consists of the boldface vertices and edges in $\mathcal{I}$.}
		\label{fig:lip:solution:intersection}
	\end{figure}	

	Before we give the description of the dynamic programming algorithm, we first observe how a solution $\lipsolution$, i.e.\ an induced path in $G$, interacts with the graph $G[V_t \cup \bd(\bar{V_t})]$, for some $t \in V(\dectree)$.
	The intersection of $\lipsolution$ with $G[V_t \cup \bd(\bar{V_t})]$ is an induced disjoint union of paths which we will denote by $\inducedpaths$ in the following.
	To keep the number of possible table entries bounded by $n^{\cO(w)}$, we have to focus on the interaction of $\inducedpaths$ with the crossing graph $\crossinggraph{t}$ w.r.t.\ $t$, in particular the intersection of $\indp$ with its edges. Note that after removing isolated vertices, $\indp$ induces a disjoint union of paths on $\crosg{t}$ which throughout the following we will denote by $\bds$. For an illustration see Figure \ref{fig:lip:solution:intersection}.
	There cannot be any additional edges crossing the cut $(V_t, \bar{V_t})$ between vertices in $\inducedpaths$ on opposite sides of the boundary that are not contained in $V(\bdsubset)$. This property of $\inducedpaths$ can be captured by considering a minimal vertex cover $\minvertexcover$ of the bipartite graph $\crossinggraph{t} - V(\bdsubset)$.
	We remark that the vertices in $\minvertexcover$ play different roles, depending on whether they lie in $\minvertexcover \cap V_t$ or $\minvertexcover \cap \bar{V_t}$. We therefore define the following two sets.
	
\begin{itemize}
	\item $\mIn{t} \defeq \minvertexcover \cap V_t$ is the set of vertices that must be avoided by $\inducedpaths$.
	\item $\mOut{t} \defeq \minvertexcover \cap \bar{V_t}$ is the set of vertices that must be avoided by a partial solution (e.g.\ the intersection of $\lipsolution$ with $G[\bar{V_t}]$) to be combined with $\indp$ to ensure that their combination does not use any edges in $\crosg{t} - V(\bds)$.
\end{itemize}	

Furthermore, $\inducedpaths$ also indicates how the vertices in $\bdsubset[V_t]$ that have degree one in $\bds$ are joined together in the graph $G_t$ (possibly outside $\bd(V_t)$). This gives rise to a collection of vertex pairs $\matching$, which we will refer to as {\em pairings}, with the interpretation that $(\matchingvertex, \matchingvertexx) \in \matching$ if and only if there is a path from $\matchingvertex$ to $\matchingvertexx$ in $\indp[V_t]$.
	
	
	The description given above immediately tells us how to index the table entries in the dynamic programming table $\dptable$ to keep track of all possible partial solutions in the graph $G[V_t \cup \bd(\bar{V_t})]$: We set the table entry $\dptable[t, (\bdsubset, \minvertexcover, \matching), i, j] = 1$, where $i \in [0..n]$ and $j \in [0..2]$, if and only if the following conditions are satisfied. For an illustration of the table indices, see Figure \ref{fig:lip:mim:table}.
	
	\begin{enumerate}[(i)]
		\item There is a set of induced paths $\inducedpaths$ of total size $i$ in $G[V_t \cup \bd(\bar{V_t})]$ such that $\indp$ has $j$ degree one endpoints in $G_t$. \label{lip:table:def:1}
		\item $E(\inducedpaths) \cap E(\crossinggraph{t}) = E(\bdsubset)$. \label{lip:table:def:2}
		\item $\minvertexcover$ is a minimal vertex cover of $\crossinggraph{t} - V(\bdsubset)$ such that $V(\inducedpaths) \cap \minvertexcover = \emptyset$. (Recall that $\minvertexcover = \mIn{t} \cup \mOut{t}$.) \label{lip:table:def:3}
		\item Let $\degonevert$ denote the vertices in $\bds[V_t]$ that have degree one in $\bds$. Let $\matching = (\matchingvertex_1, \matchingvertexx_1), \ldots, (\matchingvertex_\ell, \matchingvertexx_\ell)$ be a partition of all but $j$ vertices of $\degonevert$ into pairs, throughout the following called a {\em pairing},
		such that if we contract all edges in $\inducedpaths - E(\crosg{t})$ from $\inducedpaths$ incident with at least one vertex not in $S$ (we denote the resulting graph as $\bdsubset \odot \matching$) we obtain the same graph as when adding $\{\matchingvertex_k, \matchingvertexx_k\}$ to $\bdsubset$, for each $k \in [\ell]$. \label{lip:table:def:4}		
	\end{enumerate}

\begin{figure}
	\centering
	\includegraphics[height=.15\textheight]{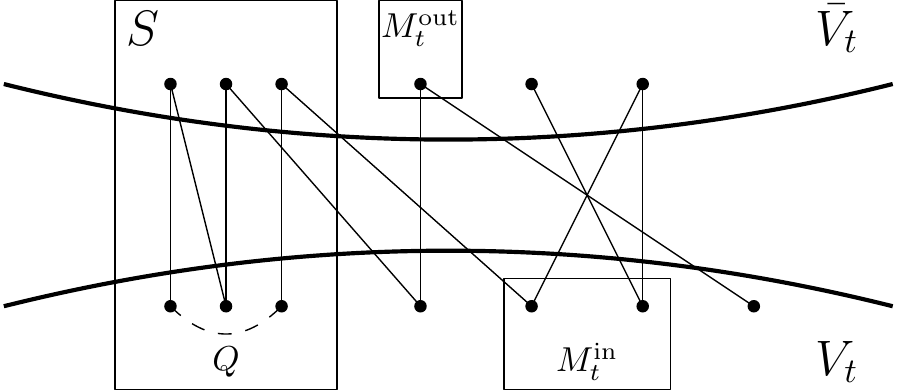}
	\caption{A crossing graph $\crossinggraph{t}$ and the structures associated with the table indices of the algorithm for \textsc{Longest Induced Path}. Note that by (\ref{lip:table:def:1}) and (\ref{lip:table:def:4}) 
	it follows that if the table entry corresponding to the above structures is $1$, then $j = 0$: Since both degree one endpoints in $S[V_t]$ are paired, this means that the corresponding set of induced paths $\inducedpaths$ has zero degree one endpoints in $G[V_t]$.}
	\label{fig:lip:mim:table}
\end{figure}	

	Regarding (\ref{lip:table:def:4}), observe that $\card{\degonevert} = 2\ell + j$ and that there are $j$ unpaired vertices in $\matching$, each of which is connected to a degree one endpoint of $\indp$ in $G_t$.
	For notational convenience, we will denote by $\dptable_t$ all table entries that have the node $t \in V(\dectree)$ as the first index.

We now show that the solution to \textsc{Longest Induced Path} can be obtained from a table entry corresponding to the root $r$ of $\dectree$ and hence ensure that the information stored in $\dptable$ is sufficient.
\begin{proposition}\label{prop:lip:corr:root}
	$G$ contains an induced path of length $i$ if and only if $\dptable[r, (\emptyset, \emptyset, \emptyset), i, 2] = 1$.
\end{proposition}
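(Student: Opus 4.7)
The plan is to unpack the table definition at the root and observe that the trivial cut at $r$ collapses the indexing data, so the equivalence reduces to the very definition of condition (\ref{lip:table:def:1}).

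At the root we have $V_r = V(G)$ and $\bar{V_r} = \emptyset$, hence $\bd(\bar{V_r}) = \emptyset$ and the crossing graph $\crossinggraph{r}$ has no vertices at all. Thus the only triple $(\bdsubset, \minvertexcover, \matching)$ compatible with conditions (\ref{lip:table:def:2})--(\ref{lip:table:def:4}) is the all-empty one: $\bdsubset$ is a subgraph of $\crossinggraph{r}$, hence empty; $\minvertexcover$ is a minimal vertex cover of the empty graph $\crossinggraph{r} - V(\bdsubset)$, hence empty; and the pairing $\matching$ is on the empty set of degree-one vertices of $\bdsubset[V_r]$, hence empty. Therefore $\dptable[r, (\emptyset, \emptyset, \emptyset), i, 2]$ is the only entry at $r$ with parameters $(i, 2)$ that could be nonzero, and conditions (\ref{lip:table:def:2})--(\ref{lip:table:def:4}) hold vacuously.

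It remains to analyze condition (\ref{lip:table:def:1}): the entry equals $1$ iff there is a set $\inducedpaths$ of induced paths in $G[V_r \cup \bd(\bar{V_r})] = G$ of total size $i$ having exactly two degree-one endpoints in $G_r = G$. By the remark preceding the table definition, each component of $\inducedpaths$ is nontrivial and so contributes exactly two degree-one vertices (its two ends); since $\bd(V_r) = \emptyset$, every such endpoint lies in $G_r$ and is counted. Hence $j = 2$ forces $\inducedpaths$ to consist of a single component, which is an induced path of length $i$ in $G$.

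Both implications follow immediately from this reading. Given an induced path $\lipsolution$ of length $i$ in $G$, taking $\inducedpaths \defeq \lipsolution$ witnesses the entry; conversely, any witness $\inducedpaths$ is itself an induced path of length $i$ in $G$. I do not anticipate a genuine obstacle here: the proposition is essentially a sanity check that the table definition at the root correctly encodes the target problem, and once the collapse of the indexing data is noted, both directions are a direct unfolding of definitions.
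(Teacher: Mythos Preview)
Your proposal is correct and follows essentially the same approach as the paper: observe that at the root the crossing graph is empty so the triple $(\bdsubset,\minvertexcover,\matching)$ must be all-empty, and then use the fact that $j=2$ forces the witnessing $\inducedpaths$ to be a single induced path. Your write-up is in fact slightly more explicit than the paper's in justifying why two degree-one endpoints imply a single component (via the nontriviality remark), but the argument is the same.
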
	
\begin{proof}
	Suppose $G$ contains an induced path $\lipsolution$ of length $i$ and consider the root $r$ of $(\dectree, \decf)$. Clearly, $G[V_r \cup \bd(\bar{V_r})] = G$, since $V_r = V(G)$ and $\bd(\bar{V_r}) = \emptyset$. Together with the fact that $\lipsolution$ is a path (and hence has two degree one endpoints in $G_r = G$, it follows that $\lipsolution$ satisfies Condition (\ref{lip:table:def:1}) for a table entry to be set to $1$. Since $\bd(\bar{V_r}) = \emptyset$, it follows that for any index in $\dptable_r$, $\bdsubset = \emptyset$, $\minvertexcover = \emptyset$ and $\matching = \emptyset$. We can conclude that $\dptable[r, (\emptyset, \emptyset, \emptyset), i, 2] = 1$.
	
	Now suppose $\dptable[r, (\emptyset, \emptyset, \emptyset), i, 2] = 1$. Then there is a set of induced paths $\inducedpaths$ of total size $i$ in $G$ by (\ref{lip:table:def:1}) having two degree one endpoints in $G_r = G$. The latter allows us to conclude that $\inducedpaths$ is in fact a single path.
\end{proof}

Throughout the following, we denote by $\allbdsubsets_t$ the set of all sets of induced disjoint paths in $\crossinggraph{t}$ on at most $4w$ vertices (which includes all possible intersections of partial solutions with $\crossinggraph{t}$ by Lemma \ref{lem:disjoint:paths:size}), for $\bdsubset \in \allbdsubsets_t$ by $\allminvertexcovers_{t, \bdsubset}$ the set of all minimal vertex covers of $\crossinggraph{t} - V(\bdsubset)$ and by $\allmatchings_{t, \bdsubset}$ the set of all pairings of degree one vertices in $\bdsubset[\bd(V_t)]$.
We now argue that the number of such entries is bounded by a polynomial in $n$ whose degree is $\cO(w)$.

\begin{proposition}\label{prop:lip:table:size}
	For each $t \in V(\dectree)$, there are at most $n^{\cO(w)}$ table entries in $\dptable_t$ and they can be enumerated in time $n^{\cO(w)}$.
\end{proposition}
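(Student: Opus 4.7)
The plan is to bound each coordinate of a table index---the subgraph $\bdsubset$, the minimal vertex cover $\minvertexcover$, the pairing $\matching$, and the two integers $i$ and $j$---independently, and then multiply the resulting bounds. An explicit enumeration procedure will follow directly from the bounding argument for each piece, so correctness of the count and of the enumeration go hand in hand.

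I would start with $\allbdsubsets_t$. Since $\bdsubset$ is an induced disjoint union of paths in $\crossinggraph{t}$ on at most $4w$ vertices, and since $\card{V(\crossinggraph{t})} \le 2n$, the number of candidate vertex subsets is at most $\sum_{k=0}^{4w} \binom{2n}{k} = n^{\cO(w)}$. For each candidate subset one checks in polynomial time whether $\crossinggraph{t}$ restricted to it is a disjoint union of nontrivial paths, so $\allbdsubsets_t$ can itself be listed in time $n^{\cO(w)}$. Next, for each fixed $\bdsubset$, I would invoke the Minimal Vertex Covers Lemma (Corollary~\ref{cor:mvc:lemma}) on the bipartite graph $\crossinggraph{t} - V(\bdsubset)$ with the bipartition induced from $(\bd(V_t), \bd(\bar V_t))$. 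The main point---the only step I consider nonobvious---is that $\crossinggraph{t} - V(\bdsubset)$ has induced matching number at most $w$: deleting vertices from a bipartite graph cannot enlarge its maximum induced matching, and by definition $\mimval(V_t) \le w$. Corollary~\ref{cor:mvc:lemma} then gives at most $n^{\cO(w)}$ minimal vertex covers, enumerable in time $n^{\cO(w)}$.

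For the pairings $\matching$, I would use that the degree-one vertices $\degonevert$ of $\bdsubset$ on the $V_t$ side form a subset of $V(\bdsubset)$, so $\card{\degonevert} \le 4w$. The number of partitions of any subset of $\degonevert$ into unordered pairs is at most $(4w)! \le (4w)^{4w}$, which is $n^{\cO(w)}$ under $w \le n$ (and anyway trivially enumerable in this time by brute force over ordered pairs). The remaining indices $i \in [0..n]$ and $j \in [0..2]$ contribute a polynomial factor of $\cO(n)$.

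Combining, the total count is bounded by
\[
\card{\allbdsubsets_t} \cdot \max_{\bdsubset}\card{\allminvertexcovers_{t,\bdsubset}} \cdot \max_{\bdsubset}\card{\allmatchings_{t,\bdsubset}} \cdot (n+1) \cdot 3 \;=\; n^{\cO(w)},
\]
and the four enumerations run sequentially in time $n^{\cO(w)}$ each, giving overall time $n^{\cO(w)}$. As noted, the only subtlety is the application of the Minimal Vertex Covers Lemma after deletion of $V(\bdsubset)$, which is handled by the monotonicity of $\mimval$ under one-sided vertex deletion.
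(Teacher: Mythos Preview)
Your proposal is correct and follows essentially the same approach as the paper: bound $\card{\allbdsubsets_t}$ by $n^{\cO(w)}$ via the $4w$-vertex bound from Lemma~\ref{lem:disjoint:paths:size}, bound $\card{\allminvertexcovers_{t,\bdsubset}}$ by $n^{\cO(w)}$ via Corollary~\ref{cor:mvc:lemma}, bound $\card{\allmatchings_{t,\bdsubset}}$ by $w^{\cO(w)}$, and absorb the polynomial factors from $i$ and $j$. Your explicit remark that $\mimval$ does not increase when passing from $\crossinggraph{t}$ to $\crossinggraph{t}-V(\bdsubset)$ is a helpful justification that the paper leaves implicit.
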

\begin{proof}
	Note that each index is an element of $\allbdsubsets_t \times \allminvertexcovers_{t, \bds} \times \allmatchings_{t, \bds} \times [0..n] \times [0..2]$.
	Since the size of each maximum induced matching in $G_{t, \bar{t}}$ is at most $w$, we know by Lemma \ref{lem:disjoint:paths:size} that the size of each index $\bdsubset$ is bounded by $4w$, so $\card{\allbdsubsets_t} \le \cO(n^{4w})$. By the Minimal Vertex Covers Lemma (Corollary \ref{cor:mvc:lemma}), $\card{\allminvertexcovers_{t, \bds}} \le n^{\cO(w)}$. Since the number of vertices in $\bdsubset$ is bounded by $4w$, we know that $\card{\allmatchings_{t, \bds}} \le w^{\cO(w)}$ and since $i \in [0..n]$ and $j \in [0..2]$, we can conclude that the number of table entries for each $t \in V(\dectree)$ is at most
	\[
		\cO(n^{4w}) \cdot n^{\cO(w)} \cdot w^{\cO(w)} \cdot (n+1) \cdot 3 = n^{\cO(w)},
	\]
	as claimed. Clearly, all elements in $\allbdsubsets_t$ and $\allmatchings_{t, \bds}$ can be enumerated in time $n^{\cO(w)}$ and by the Minimal Vertex Covers Lemma, we know that all elements in $\allminvertexcovers_{t, \bds}$ can be enumerated in time $n^{\cO(w)}$ as well. The claimed time bound on the enumeration of the table indices follows.
\end{proof}

	In the remainder of the proof we will describe how to fill the table entries from the leaves of $T$ to its root, asserting the correctness of the updates in the table. Together with Proposition \ref{prop:lip:corr:root}, this will yield the correctness of the algorithm.
	
	\textbf{Leaves of $\dectree$.} Let $t \in V(\dectree)$ be a leaf node of $\dectree$ and let $v = \decf^{-1}(t)$. We observe that any nonempty intersection of an induced disjoint union of paths in $\crossinggraph{t}$ is a single edge using $v$ or a path of length two whose middle vertex is $v$. Hence, all indices in $\dptable_t$ that are set to $1$ (with nonempty $\bdsubset$) have the following properties: Either $\bdsubset$ is a single edge using $v$, and we denote the set of all such edges by $\allbdsubsets^1_{t, v}$ or a path of length two with $v$ as the middle vertex, and we denote the corresponding set of such paths as $\allbdsubsets^2_{t, v}$. 
	For each $\bds \in \allbdsubsets^1_{t, v} \cup \allbdsubsets^2_{t, v}$, the corresponding minimal vertex cover of $\crosg{t} - V(S)$ is empty, since $v \in V(S)$ and no edges remain in $\crosg{t}$ when we remove $v$.
	Since $v$ is the only vertex in $G_t$, $\matching = \emptyset$ in both of these cases. If $\bdsubset \in \allbdsubsets^1_{t, v}$, then $j = 1$ and if $\bdsubset \in \allbdsubsets^2_{t, v}$ then $j = 0$. Additionally, a table entry is set to $1$ if $\bdsubset = \emptyset$ and $i = 0$ and since the solution is empty in this case, $\matching = \emptyset$ and $j = 0$. The two corresponding minimal vertex covers are $\{v\}$ and $N(v)$.
	Hence, we set the table entries in the leaves as follows. 
	\begin{align*}
		\dptable[t, (\bdsubset, \minvertexcover, \matching), i, j] = 
			\left\{\begin{array}{ll}
				1, &\mbox{if } \bdsubset \in \allbdsubsets^1_{t, v}, \minvertexcover = \emptyset, \matching = \emptyset, i = 2 \mbox{ and } j = 1 \\
				1, &\mbox{if } \bdsubset \in \allbdsubsets^2_{t, v}, \minvertexcover = \emptyset, \matching = \emptyset, i = 3 \mbox{ and } j = 0 \\
				1, &\mbox{if } \bdsubset = \emptyset, \minvertexcover \in \{\{v\}, N(v)\}, \matching = \emptyset, i = 0 \mbox{ and } j = 0 \\
				0, &\mbox{otherwise}
			\end{array}\right.		
	\end{align*}
	
	\textbf{Internal nodes of $\dectree$.} Let $t \in V(\dectree)$ be an internal node of $\dectree$, let $(\bdsubset, \minvertexcover, \matching) \in \allbdsubsets_t \times \allminvertexcovers_{t, \bdsubset} \times \allmatchings_{t, \bdsubset}$, let $i \in [0..n]$ and $j \in [0..2]$. We show how to compute the table entry $\dptable[t, (\bdsubset, \minvertexcover, \matching), i, j]$ from table entries corresponding to the children $\childA$ and $\childB$ of $t$ in $\dectree$. To do so, we have to take into account the ways in which partial solutions for $G[V_\childA \cup \bd(\bar{V_\childA})]$ and $G[V_\childB \cup \bd(\bar{V_\childB})]$ interact. We therefore try all pairs of indices $\indexshort_\childA = ((\bdsubset_\childA, \minvertexcover_\childA, \matching_\childA), i_\childA, j_\childA)$, $\indexshort_\childB = ((\bdsubset_\childB, \minvertexcover_\childB, \matching_\childB), i_\childB, j_\childB)$ and for each such pair, first check whether it is `compatible' with $\indexshort_t$: We say that $\indexshort_\childA$ and $\indexshort_\childB$ are {\em compatible with $\indexshort_t$} if and only if any partial solution $\inducedpaths_\childA$ represented by $\indexshort_\childA$ for $G[V_\childA \cup \bd(\bar{V_\childA})]$ and $\inducedpaths_\childB$ represented by $\indexshort_\childB$ for $G[V_\childB \cup \bd(\bar{V_\childB})]$ can be combined to a partial solution $\inducedpaths_t$ for $G[V_t \cup \bd(\bar{V_t})]$ that is represented by the index $\indexshort_t$. We then set $\dptable_t[\indexshort_t] \defeq 1$ if and only if we can find a compatible pair of indices $\indexshort_\childA$, $\indexshort_\childB$ as above such that $\dptable_\childA[\indexshort_\childA] = 1$ and $\dptable_\childB[\indexshort_\childB] = 1$.

\begin{description}
	\item[Step 0 (Valid Index).] We first check whether the index $\indexshort_t$ can represent a valid partial solution of $G[V_t \cup \bd(\bar{V_t})]$. The definition of the table entries requires that $\bdsubset \odot \matching$ is a disjoint union of paths, so if $\bdsubset \odot \matching$ is not a disjoint union of paths, we set $\dptable_t[\indexshort_t] \defeq 0$ and skip the remaining steps. In general, the number of degree one vertices in $V(\bdsubset \odot \matching) \cap V_t$ has to be equal to $j$ and we can proceed as described in Steps 1-4, except for the following special cases.
	\begin{description}
		\item[Special Case 1 ($j = 2$, $V(\bdsubset \odot \matching) \cap V_t$ has $0$ deg.\ $1$ vertices).] This is the case when $\inducedpaths$ does not contain any edge in $E(\crossinggraph{t})$. It immediately follows that $\bdsubset$ has to be empty (and hence $\matching$ has to be empty). If not, we set $\dptable_t[\indexshort_t] = 0$ and skip the remaining steps. We would like to remark that the case when $j=2$ and $\bdsubset = \emptyset$ will have to be dealt with separately in Step 3, since $\bdsubset \odot \matching$ is the empty graph.
		\item[Special Case 2 ($j = 2$, $V(\bdsubset \odot \matching) \cap V_t$ has $2$ deg.\ $1$ vertices in same component).] In this case, $j = 2$ and the degree one vertices in $V(\bdsubset \odot \matching) \cap V_t$ are in the same component $C$ of $\bdsubset \odot \matching$. 
		Then, $\bdsubset \odot \matching$ has to consist of a single component (and $\inducedpaths$ of a single path): If there was another component $C'$ in $\bdsubset \odot \matching$, $C$ and $C'$ could never be joined together to become a single path in the vertices of $\bar{V_t}$ and hence we can never obtain a valid solution from the partial solution represented by this index.
		So if $\bdsubset \odot \matching$ has more than one component, we set $\dptable_t[\indexshort_t] = 0$ and skip the remaining steps, otherwise we do not have to pay any further attention to this case in the following computations.
		\item[Special Case 3 ($j = 0$ and $S = \emptyset$).] This is the case when $\inducedpaths$ does not use any vertex of $V_t$. We then set $\dptable_t[\indexshort_t] = 1$ if and only if $i = 0$ and skip the remaining steps.
	\end{description} 
	
	\item[Step 1 (Induced disjoint unions of paths).] We now check whether $\bdsubset_\childA$ and $\bdsubset_\childB$ are compatible with $\bdsubset$. We have to ensure that
	\begin{itemize}
		\item $\bdsubset \cap \crossinggraphAB{\childA}{\bar{t}} = \bdsubset_\childA \cap \crossinggraphAB{\childA}{\bar{t}}$,
		\item $\bdsubset \cap \crossinggraphAB{\childB}{\bar{t}} = \bdsubset_\childB \cap \crossinggraphAB{\childB}{\bar{t}}$ and
		\item $\bdsubset_\childA \cap \crossinggraphAB{\childA}{\childB} = \bdsubset_\childB \cap \crossinggraphAB{\childA}{\childB}$.
	\end{itemize}
	If these conditions are not satisfied, we skip the current pair of indices $\indexshort_\childA$, $\indexshort_\childB$.
	In the following, we use the notation $\bdsubsett = \bdsubset_\childA \cap \crossinggraphAB{\childA}{\childB} ~(= \bdsubset_\childB \cap \crossinggraphAB{\childA}{\childB})$.
	\item[Step 2 (Pairings of degree one vertices and $j$).] 
	First, we deal with Special Case 1, i.e.\ $j=2$ and $\bdsubset = \emptyset$.
	We then check whether the graph obtained from taking $\bdsubsett$ and adding an edge (and, if not already present, the corresponding vertices) for each pair in $\matching_\childA$ and $\matching_\childB$ is a single induced path. Note that we require the values of the integers $j_\childA$ and $j_\childB$ to be the number of endpoints of the resulting path in $V_\childA$ and $V_\childB$, respectively.
	
	Since the case $j = 0$ and $S = \emptyset$ is dealt with in Special Case 3 and $S$ cannot be empty whenever $j =1$, we may from now on assume that $\bdsubset \neq \emptyset$ and hence $\bdsubset \odot \matching \neq \emptyset$.\footnote{Note that it could still happen that $\matching = \emptyset$ but since this does not essentially influence the following argument, we assume that $\matching \neq \emptyset$.}
	
	Consider the graph on vertex set $V(\bdsubset) \cup V(\bdsubsett)$ whose edges consist of the edges in $\bdsubset$ and $\bdsubsett$ together with the pairs in $\matching_\childA$ and $\matching_\childB$. We then contract all edges in $\bdsubsett$ and all edges that were added due to the pairings $\matching_\childA$ and $\matching_\childB$ and incident with a vertex not in $S$, and denote the resulting graph by $\matchinggraph$. Then, $\matching_\childA$ and $\matching_\childB$ are compatible if and only if $\matchinggraph = \bdsubset \odot \matching$. By the definition of the table entries (and since by Step 0, $\bdsubset \odot \matching$ is a disjoint union of paths) we can then see that $\matching_\childA, \matching_\childB$ together with the edges of $\bdsubsett$ connect the paired degree one vertices of $\matching$ as required. 
	We furthermore need to ensure that the values of the integers $j_\childA$ and $j_\childB$ are the number of degree one endpoints in $\matchinggraph[V_\childA]$ and $\matchinggraph[V_\childB]$, respectively.	
	For an illustration see Figure \ref{fig:lip:mim:join:matching}.
	
	\begin{figure}
		\centering
		\includegraphics[height=.22\textheight]{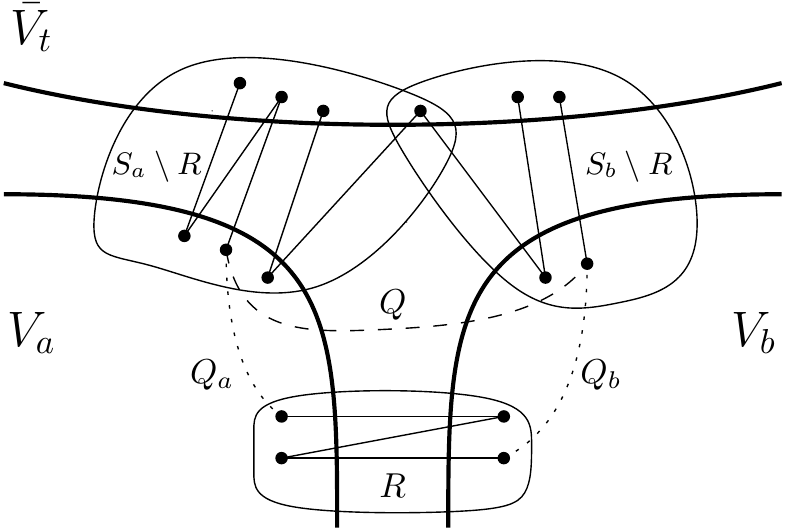}
		\caption{Step 3 of the join operation. Recall that $S_a \cap \crosgAB{\childA}{\childB} = S_b \cap \crosgAB{\childA}{\childB} = R$ by Step 1.}
		\label{fig:lip:mim:join:matching}
	\end{figure}	
		
	\item[Step 3 (Minimal vertex covers).] We now describe the checks we have to perform to ensure that $\minvertexcover_\childA$ and $\minvertexcover_\childB$ are compatible with $\minvertexcover$, which from now on we will denote by $\mvc_t$ to avoid confusion. For ease of exposition, we denote by $\inducedpaths_t$, $\inducedpaths_\childA$ and $\inducedpaths_\childB$ (potential) partial solutions corresponding to $\indexshort_t$, $\indexshort_\childA$ and $\indexshort_\childB$, respectively.
	
	Recall that the purpose of the minimal vertex cover $\mvc_t$ is to ensure that no unwanted edges appear between vertices used by the partial solution $\indp_t$ and any partial solution of $G[\bar{V_t} \setminus \bd(\bar{V_t})]$ that can be combined with $\indp_t$. Hence, when checking whether $\indp_\childA$ and $\indp_\childB$ can be combined to $\indp_t$ without explicitly having access to these sets of induced disjoint paths, we have to make sure that the indices $\indexshort_\childA$ and $\indexshort_\childB$ assert the absence of unwanted edges --- for any intersection of a partial solution with $\crosg{\childA}$ and $\crosg{\childB}$, as well as with $\crosgAB{\childA}{\childB}$. Recall that $\crosg{\childA} = \crosgAB{\childA}{\bar{t}} \cup \crosgAB{\childA}{\childB}$ and $\crosg{\childB} = \crosgAB{\childB}{\bar{t}} \cup \crosgAB{\childA}{\childB}$.
	
	We distinguish several cases, depending on where the unwanted edge might appear: First, between two intermediate vertices of partial solutions and second, between a vertex in $\bds_\childA$ or $\bds_\childB$ and an intermediate vertex. Step 3.1 handles the former and Step 3.2 the latter. In Step 3.1, we additionally have to distinguish whether the edge might appear in $\crosgAB{\childA}{\childB}$ or in $\crosgAB{\childA}{\bar{t}}$ (respectively, in $\crosgAB{\childB}{\bar{t}}$).  
	
	In the following, we let $\mOutAB{\childA}{\childB} \defeq \mOut{\childA} \cap V_\childB$ and $\mOutAB{\childB}{\childA} \defeq \mOut{\childB} \cap V_\childA$.
	\begin{description}
		\item[Step 3.1.1 (intermediate-intermediate, $\crosgAB{\childA}{\childB}$).]	$\mOutAB{\childA}{\childB} \subseteq \mIn{\childB}$ and $\mOutAB{\childB}{\childA} \subseteq \mIn{\childA}$.
		 
		 A vertex $v \in \mOutAB{\childA}{\childB}$ can have a neighbor $w \in V_\childA$ which is used as an intermediate vertex in $\inducedpaths_\childA$. Hence, to avoid that the unwanted edge $\{v, w\}$ appears in the combined solution $\inducedpaths_\childA \cup \inducedpaths_\childB$, we have to make sure that $v$ is not used by $\inducedpaths_\childB$, which is asserted if $v \in \mIn{\childB}$. By a symmetric argument we justify that $\mOutAB{\childB}{\childA} \subseteq \mIn{\childA}$.
	
	\item[Step 3.1.2 (intermediate-intermediate, $\crosgAB{\childA}{\bar{t}}$ or $\crosgAB{\childB}{\bar{t}}$).] We have to check the following two conditions, the first one regarding $\mIn{t}$ and the second one regarding $\mOut{t}$.
	\begin{enumerate}[(a)]
		\item $\mIn{t} \subseteq \mIn{\childA} \cup \mIn{\childB}$: By the definition of $\mIn{t}$, $\inducedpaths_t$ has to avoid the vertices in $\mIn{t}$. Hence, $\inducedpaths_\childA$ and $\inducedpaths_\childB$ have to avoid the vertices in $\mIn{t}$ as well, which is ensured if for $v \in \mIn{t} \cap V_\childA$, we have that $v \in \mIn{\childA}$ and for $w \in \mIn{t} \cap V_\childB$, we have that $w \in \mIn{\childB}$.
		\item For each vertex $v \in \mOut{t}$ having a neighbor $x$ in $V_\childA$ such that $x$ is also contained in $V_\childA \setminus (V(\bdsubset_\childA) \cup \mIn{\childA})$, we have that $v \in \mOut{\childA}$. Recall that by the definition of $\mOut{t}$, $\inducedpaths_t$ could use the vertex $x$ as an intermediate vertex. If $x \notin V(\bdsubset_\childA) \cup \mIn{\childA}$, this means that $x$ might be used by $\inducedpaths_\childA$ as an intermediate vertex as well. Now, in a table entry representing a partial solution $\inducedpaths_\childA$ using $x$, this is signalized by having $v \in \mOut{\childA}$.
		 We check the analogous condition for $\mOut{\childB}$.
	\end{enumerate}
		 \item[Step 3.2 (intermediate-($\bds_\childA$ or $\bds_\childB$)).] $N_\childA[V(\bds_\childB) \setminus V(\bds_\childA)] \subseteq \mIn{\childA}$ and $N_\childB[V(\bds_\childA) \setminus V(\bds_\childB)] \subseteq \mIn{\childB}$.
		 
		 We justify the first condition and note that the second one can be argued for symmetrically. Clearly, $\indp_\childA$ cannot have a neighbor $x$ of any vertex $v \in V(\bds_\childB)$ as an intermediate vertex, if $\indp_\childA$ is to be combined with $\indp_\childB$. However, if $v \in V(\bds_\childA)$, then $\indp_\childA$ does not use $x$ by Part (\ref{lip:table:def:2}) of the definition of the table entries. Note that this includes all vertices in $V(\bdss) \subseteq V(\bds_\childA)$. If on the other hand, $v \in V(\bds_\childB) \setminus V(\bds_\childA)$ then the neighbors of $v$ have not been accounted for earlier, since $v$ is not a vertex in the partial solution $\indp_\childA$. Hence, we now have to assert that $\indp_\childA$ does not use $x$, the neighbor of $v$, and so we require that $x \in \mIn{\childA}$.
	\end{description}
	
	\item[Step 4 ($i$).] We consider all pairs of integers $i_\childA$, $i_\childB$ such that $i = i_\childA + i_\childB - \card{V(\bdsubsett)}$. By Step 2, all vertices in $R$ are used in the partial solution $\indp_t$. They are counted twice, since they are both accounted for in $\indp_\childA$ and in $\indp_\childB$.

\end{description}

Now, we let $\dptable_t[\indexshort_t] = 1$ if and only if there is a pair of indices $\indexshort_\childA = ((\bds_\childA, \mvc_\childA, \matching_\childA), i_\childA, j_\childA)$ and $\indexshort_\childB = ((\bds_\childB, \mvc_\childB, \matching_\childB), i_\childB, j_\childB)$ passing all checks performed in Steps 1-4 above, such that $\dptable_\childA[\indexshort_\childA] = 1$ and $\dptable_\childB[\indexshort_\childB] = 1$. This finishes the description of the algorithm.

\begin{proposition}\label{prop:lip:corr:join}
	Let $t \in V(\dectree)$. The table entries $\dptable_t[\indexshort_t]$ computed according to Steps 0-4 above are correct.
\end{proposition}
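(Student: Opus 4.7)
The plan is to prove Proposition \ref{prop:lip:corr:join} by strong induction on the depth of $t$ in $\dectree$, assuming as inductive hypothesis that for both children $\childA, \childB$ of $t$ all entries $\dptable_\childA[\cdot]$ and $\dptable_\childB[\cdot]$ are correct. At $t$ we must show both directions: (soundness) if $\dptable_t[\indexshort_t] = 1$ then there is a set $\indp_t$ of induced disjoint paths in $G[V_t \cup \bd(\bar{V_t})]$ witnessing conditions (\ref{lip:table:def:1})--(\ref{lip:table:def:4}), and (completeness) if such $\indp_t$ exists then the algorithm sets $\dptable_t[\indexshort_t] = 1$. Before starting I would dispatch the Special Cases of Step 0 separately: in Special Case 3 correctness is immediate, and in Special Cases 1 and 2 correctness reduces to a direct check on $\bdsubset \odot \matching$. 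In the main case, I may assume throughout that $\bdsubset \odot \matching$ is a nonempty disjoint union of paths with exactly $j$ degree-one vertices in $V_t$.

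For soundness, I would take the compatible indices $\indexshort_\childA, \indexshort_\childB$ that caused the algorithm to set $\dptable_t[\indexshort_t] = 1$, obtain witnessing partial solutions $\indp_\childA$ and $\indp_\childB$ from the inductive hypothesis, and let $\indp_t \defeq \indp_\childA \cup \indp_\childB$. I would verify the four defining conditions in order: (a) Step 1 ensures $E(\indp_t) \cap E(\crosg{t}) = E(\bdsubset)$ because the crossing edges of $\indp_t$ partition according to $\crosgAB{\childA}{\bar{t}}$, $\crosgAB{\childB}{\bar{t}}$ and $\crosgAB{\childA}{\childB}$, and the three bullet items of Step 1 identify each piece with the corresponding piece of $\bdsubset_\childA, \bdsubset_\childB$; (b) Step 2, together with the contraction construction of $\matchinggraph$, guarantees that $\indp_t$ is a disjoint union of paths and not of more complicated shapes, and records the correct number of degree-one endpoints in $V_\childA$ and $V_\childB$, which combine via $j = j_\childA + j_\childB - (\text{vertices of }R)$-type bookkeeping to give $j$; (c) Step 3 is where we show that no unwanted edge of $G$ joins vertices of $\indp_t$ — for an edge in $\crosgAB{\childA}{\childB}$ with endpoints inside $\indp_\childA$ and $\indp_\childB$, Step 3.1.1 forces one endpoint into $\mIn{}$ of the other side, contradicting its use; for an edge leaving $V_t$, Step 3.1.2(a) propagates $\mIn{t}$ down, and Step 3.1.2(b) ensures $\mOut{t}$ captures every relevant intermediate vertex; Step 3.2 rules out edges between a boundary vertex of one side's partial solution and an intermediate vertex of the other's; (d) finally Step 4 makes $i$ count correctly.

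For completeness, given $\indp_t$ I would canonically split it by the cut $(V_\childA, V_\childB)$: set $\indp_\childA \defeq \indp_t[V_\childA \cup \bd(\bar{V_\childA})]$ and analogously for $\indp_\childB$, then read off $\bdsubset_\childA \defeq \indp_\childA \cap \crosg{\childA}$ (minus isolated vertices), and $\matching_\childA$ from the endpoint-pairing that $\indp_\childA$ induces on the degree-one vertices of $\bdsubset_\childA[\bd(V_\childA)]$. The nontrivial choice is $\mvc_\childA$: I would define $U_\childA \defeq \{v \in V(\crosg{\childA}) \setminus V(\bdsubset_\childA) : v \notin V(\indp_\childA)\}$, observe that $U_\childA$ is a vertex cover of $\crosg{\childA} - V(\bdsubset_\childA)$ because $\indp_\childA$ is an induced subgraph of $G$, and then let $\mvc_\childA$ be any minimal vertex cover contained in $U_\childA$ (which exists and is generated by Corollary~\ref{cor:mvc:lemma}). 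Then $\indexshort_\childA$ is a valid table index witnessed by $\indp_\childA$, and by the inductive hypothesis $\dptable_\childA[\indexshort_\childA] = 1$; similarly for $\childB$. It remains to verify that $(\indexshort_\childA, \indexshort_\childB)$ passes every check in Steps 1--4 against $\indexshort_t$, which mirrors the verifications done in the soundness direction.

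The main obstacle is Step 3 in both directions: ensuring that the conditions on $\mIn{\childA}, \mOut{\childA}, \mIn{\childB}, \mOut{\childB}$ precisely rule out all unwanted edges — edges in $\crosgAB{\childA}{\childB}$, in $\crosgAB{\childA}{\bar{t}} \cup \crosgAB{\childB}{\bar{t}}$, and between boundary and intermediate vertices — without being so restrictive that some legitimate partial solution fails to be represented. In the completeness direction, the delicate point is that the $U_\childA$ built above must be shown to restrict to a \emph{minimal} vertex cover satisfying exactly the containment conditions (Step 3.1.1, 3.1.2(a)--(b), 3.2) required against $\mvc_t$; this will follow from a careful case analysis of where a vertex $v \in V(\crosg{\childA}) \setminus V(\bdsubset_\childA)$ can lie ($V_\childA$ vs. $V_\childB$ vs. $\bar{V_t}$), whether $v \in V(\indp_t)$, and whether $v$ has a neighbor in $V(\indp_t)$ outside $V(\bdsubset)$, combined with the minimality of $\mvc_t$ in $\crosg{t} - V(\bdsubset)$.
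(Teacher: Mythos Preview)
Your overall induction scheme and the soundness direction match the paper's approach, and the paper in fact treats soundness just as briefly as you do. The gap is in the completeness direction, specifically in your construction of $\minvertexcover_\childA$ and $\minvertexcover_\childB$.

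You propose to let $\minvertexcover_\childA$ be \emph{any} minimal vertex cover of $\crossinggraph{\childA} - V(\bdsubset_\childA)$ contained in $U_\childA$, and similarly for $\childB$. But an arbitrary such choice will not in general satisfy the Step~3 compatibility checks. For Step~3.1.2(a) you need $\mIn{t} \cap V_\childA \subseteq \mIn{\childA}$; a vertex $x \in \mIn{t} \cap V_\childA$ does lie in $U_\childA$, but nothing forces it into a \emph{minimal} vertex cover inside $U_\childA$, since the edges incident to $x$ may be covered from the other side. For Step~3.1.1 the situation is worse because the conditions couple $\minvertexcover_\childA$ and $\minvertexcover_\childB$: an edge $\{x,w\}$ of $\crossinggraphAB{\childA}{\childB}$ with $x \in V_\childA$, $w \in V_\childB$, both outside $V(\indp_t)$ and outside $V(\bdsubset_\childA) \cup V(\bdsubset_\childB)$, could be covered by $w$ in $\minvertexcover_\childA$ (so $w \in \mOutAB{\childA}{\childB}$) and by $x$ in $\minvertexcover_\childB$ (so $x \in \mOutAB{\childB}{\childA}$), and then neither $w \in \mIn{\childB}$ nor $x \in \mIn{\childA}$ holds. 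The ``careful case analysis'' you promise cannot rescue this, because the statement you would need---that every minimal vertex cover in $U_\childA$ satisfies the containments---is simply false.

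The paper handles this by constructing $\minvertexcover_\childA$ (and $\minvertexcover_\childB$) explicitly in four stages, each stage adding exactly the vertices required by one of the checks (3.1.1, 3.1.2(a), 3.2, and finally 3.1.2(b)), and arguing after each stage that every newly added vertex covers a previously uncovered edge, so that minimality is preserved. That tailored construction, not an arbitrary selection, is what makes the cross-conditions between $\minvertexcover_\childA$, $\minvertexcover_\childB$ and $\minvertexcover_t$ hold simultaneously. You should replace your ``any minimal vertex cover in $U_\childA$'' step with such a staged construction.
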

\begin{proof}
	Suppose there is partial solution $\indp_t$, an induced disjoint union of paths in $G[V_t \cup \bd(\bar{V_t})]$. We claim that for any index $\indexshort_t$ representing the partial solution $\indp_t$, $\dptable_t[\indexshort_t] = 1$ after the join operation described by Steps 0-4, assuming as the induction hypothesis that the table values of the children $\childA$ and $\childB$ of $t$ are computed correctly.
	Any index $\indexshort_t$ representing $\indp_t$ has the following properties: $\bds = \indp_t \cap \crosg{t}$ and the pairing $\matching$ of the degree one vertices in $\bds[V_t]$ can be obtained by letting $(\matvtx, \matvtxx) \in \matching$ if and only if there is a path connecting $\matvtx$ and $\matvtxx$ in $\indp_t[V_t]$. We furthermore have that $i = |V(\indp_t)|$ and $j$ is the number of degree one endpoints in $\indp_t[V_t \setminus \bd(V_t)]$. However, there might be several choices for the minimal vertex cover $\mvc$ of $\crosg{t} - V(\bds)$. By the definition of the table entries, $\mIn{t} \subseteq V_t \setminus V(\indp_t)$ and $\mOut{t}$ contains all vertices of $\bar{V_t}$ that have a neighbor in $V(\indp_t) \setminus V(\bds)$.
	Throughout the remainder of the proof, we fix one such vertex cover $\mvc$. Clearly, $\indexshort_t = ((\bds, \mvc, \matching), i, j)$ is a valid index according to the checks done in Step 0 and represents $\indp_t$ in the sense of the definition of the table entries.

	We observe that $\indp_t$ induces partial solutions for the children $\childA$ and $\childB$ of $t$: We let $\indp_\childA = \indp_t \cap G[V_\childA \cup \bd(\bar{V_\childA})]$ and $\indp_\childB = \indp_t \cap G[V_\childB \cup \bd(\bar{V_\childB})]$. We now show how to obtain indices $\indexshort_\childA$ and $\indexshort_\childB$ representing $\indp_\childA$ and $\indp_\childB$, respectively, that are compatible to be combined to the index $\indexshort_t$ in the sense of Steps 1-4 of the algorithm presented above. In complete analogy to above, we obtain $\bds_\childA = \indp_\childA \cap \crosg{\childA}$, the pairing $\matching_\childA$ of degree one endpoints in $\bds_\childA[V_\childA]$, and the integers $i_\childA$ and $j_\childA$ and we proceed in the same way to obtain $\bds_\childB, \matching_\childB, i_\childB$ and $j_\childB$. Again, there will be several choices for the minimal vertex covers $\mvc_\childA$ of $\crosg{\childA} - V(\bds_\childA)$ and $\mvc_\childB$ of $\crosg{\childB} - V(\bds_\childB)$, of which we will choose one `representative'. For the details see further below. What we can immediately verify is that $\bds_\childA$ and $\bds_\childB$ are compatible with $\bds$ in the sense of Step 1, that $\matching_\childA$ and $\matching_\childB$ are compatible with $\matching$ in the sense of Step 2, that the values of $j_\childA$ and $j_\childB$ are compatible with $j$, and that $i_\childA$ and $i_\childB$ are compatible with $i$ in accordance with Step 4. Throughout the following, we denote by $\bdss = \bds_\childA \cap \bds_\childB$ and note that $\bdss$ is contained in the crossing graph $\crosgAB{\childA}{\childB}$.
	
	We now explain how to construct a pair of minimal vertex covers $\mvc_\childA$ and $\mvc_\childB$ of $\crosg{\childA} - V(\bds_\childA)$ and $\crosg{\childB} - V(\bds_\childB)$, respectively, making sure that they are compatible with $\mvc$ in the sense of Step 3 of the algorithm description. 
	Note that in the following, we only show how to construct $\mvc_\childA$ and we perform the symmetric steps to construct $\mvc_\childB$. Our strategy is as follows: We add vertices to $\mvc_\childA$ in consecutive stages and ensure in each stage that all vertices we add cover an edge that has not been covered by $\mvc_\childA$ so far. Hence minimality of the resulting vertex cover will follow. We furthermore point out at each stage, which part of Step 3 in the description of the join operation it satisfies.
	\begin{enumerate}[1.]
		\item Let $w$ be an intermediate vertex of $\indp_\childB$ and let $x$ be a neighbor of $w$ in $V_\childA$. Then, add $x$ to $\mIn{\childA}$. Now, let $w$ be an intermediate vertex of $\indp_\childA$ and $x$ be a neighbor of $\indp_\childA$ in $\bar{V_\childA}$. Then, add $x$ to $\mOut{\childA}$. In both cases, this covers the edge $\{w, x\}$. Since we do not add any more vertices to $\mOutAB{\childA}{\childB}$ in the remaining construction, the vertex $w$ in the first case will never be added to $\mvc_\childA$ and since in the second case, $w$ is an intermediate vertex, it will not be added to $\mvc_\childA$ either. Hence this stage of the construction cannot violate the minimality condition of $\mvc_\childA$. 
		Note that since we proceed symmetrically for $\mvc_\childB$, the condition in Step 3.1.1 is also satisfied by this part of the construction.\label{lip:corr:mvc:1}
		\item We add any $x \in \mIn{t} \cap V_\childA$ to $\mIn{\childA}$. Since $\mvc_t$ is minimal, $x$ covers an edge $\{x, w\}$ in $\crosgAB{\childA}{\bar{t}}$, where $w \in \bar{V_t} \setminus V(\bds_t)$. Hence, the edge $\{x, w\}$ has not been covered so far by $\mvc_\childA$. This ensures that the condition in Step 3.1.2 a) is met. \label{lip:corr:mvc:2}
		\item If a vertex $x$ in $V_\childA \setminus V(\bds_\childA)$ has a neighbor $w$ in $V(\bds_\childB) \setminus V(\bds_\childA)$, then add $x$ to $\mIn{\childA}$, so $x$ covers the edge $\{x, w\}$. Note that this vertex $w$ is different from the one in Stage \ref{lip:corr:mvc:2}, as $w \in V(\bds_\childB) \setminus V(\bds_\childA)$ and hence $w$ is a vertex of $\bds_t$ or it is {\em not} contained in $\bar{V_t}$. This asserts that the condition in Step 3.2 is satisfied.
		\item Consider the subgraph $G^*$ of $\crosg{\childA} - V(\bds_\childA)$ on the edges that have not been covered by $\mvc_\childA$ so far. By Stage \ref{lip:corr:mvc:1}, this graph does not touch any vertex in $\indp_\childA$ or $\indp_\childB$. We then add all vertices in $V(G^*) \cap V_\childA$ to $\mIn{\childA}$. This ensures that there is no vertex $v \in \mOut{t}$ that violates the condition of Step 3.1.2 b). \label{lip:corr:mvc:4}
	\end{enumerate}
	
	By the above construction, in particular by Stage \ref{lip:corr:mvc:4}, we verify that $\mvc_\childA$ is a vertex cover of $\crosg{\childA} - V(\bds_\childA)$ and in each stage, we checked that $\mvc_\childA$ remains minimal after adding the corresponding vertices.
	
	We have shown how to derive from the partial solutions $\indp_\childA$ and $\indp_\childB$ a pair of table entries $\indexshort_\childA$, $\indexshort_\childB$ that represent them. Since we assume inductively that the algorithm is correct for the children of $t$, we know that $\dptable_\childA[\indexshort_\childA] = 1$ and $\dptable_\childB[\indexshort_\childB] = 1$.  By the description of the algorithm, this implies that $\dptable_t[\indexshort_t] = 1$ which concludes this direction of the proof.
	
	For the other direction, suppose there exists an index $\indexshort_t$ such that $\dptable_t[\indexshort_t] = 1$. By the description of the algorithm we can find pairs of indices $\indexshort_\childA$ and $\indexshort_\childB$ such that $\dptable_\childA[\indexshort_\childA] = 1$ and $\dptable_\childB[\indexshort_\childB] = 1$. Assuming for the induction hypothesis that the values of the children of $t$ are computed correctly, we obtain the corresponding partial solutions $\indp_\childA$ and $\indp_\childB$. Following the description of the algorithm, we can then see that $\indp_\childA$ and $\indp_\childB$ can be combined to a valid solution $\indp_t$ which is represented by $\indexshort_t$.
\end{proof}

By Propositions \ref{prop:lip:corr:root} and \ref{prop:lip:corr:join} and the fact that in the leaf nodes of $\dectree$, we enumerate all possible partial solutions, we know that the algorithm we described is correct.
Since by Proposition \ref{prop:lip:table:size}, there are at most $n^{\cO(w)}$ table entries at each node of $\dectree$ (and they can be enumerated in time $n^{\cO(w)}$), the value of each table entry in $\dptable_t$ as above can be computed in time $n^{\cO(w)} \cdot n^{\cO(w)} \cdot n^{\cO(1)} = n^{\cO(w)}$, since each check described in Steps 0-4 can be done in time polynomial in $n$. Since additionally, $|V(\dectree)| = \cO(n)$, the total runtime of the algorithm is $n^{\cO(w)} \cdot n^{\cO(w)} \cdot \cO(n) = n^{\cO(w)}$ and we have the following theorem.

\begin{theorem}
	There is an algorithm that given a graph $G$ on $n$ vertices and a branch decomposition $(\dectree, \decf)$ of $G$, solves \textsc{Longest Induced Path} in time $n^{\cO(w)}$, where $w$ denotes the mim-width of $(\dectree, \decf)$.
\end{theorem}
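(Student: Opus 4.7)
The plan is to package together the dynamic programming scheme already set up in this subsection. Concretely, I would instantiate the algorithm as follows: root $\dectree$ at a degree-two node (subdividing an arbitrary edge), and process its nodes in bottom-up order. At each node $t \in V(\dectree)$, enumerate all table indices $\indexshort_t = ((\bds, \mvc, \matching), i, j) \in \allbdsubsets_t \times \allminvertexcovers_{t, \bds} \times \allmatchings_{t, \bds} \times [0..n] \times [0..2]$ and compute $\dptable_t[\indexshort_t]$ using the recurrence described: for leaves by the explicit case analysis on $\decf^{-1}(t)$, and for internal nodes by looping over compatible pairs of child indices and executing Steps 0--4 of the join operation. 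Once $\dptable_r$ is filled at the root $r$, return the largest $i$ for which $\dptable[r, (\emptyset, \emptyset, \emptyset), i, 2] = 1$.

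For correctness, I would argue that Proposition~\ref{prop:lip:corr:root} reduces the problem to computing $\dptable[r, (\emptyset, \emptyset, \emptyset), i, 2]$, so it suffices to show that the recurrence fills every $\dptable_t[\indexshort_t]$ correctly. This I would prove by induction on the depth of $t$ in $\dectree$: the base case at leaves is immediate by inspection of the four-line definition for leaves, and the inductive step is exactly Proposition~\ref{prop:lip:corr:join}, which already verifies that any partial solution $\indp_t$ for $G[V_t \cup \bd(\bar{V_t})]$ decomposes into partial solutions $\indp_\childA, \indp_\childB$ yielding compatible child indices (and vice versa). No further combinatorial argument is required here, since all the delicate work of matching up vertex covers and pairings across the cut has been handled in Steps 0--4.

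For the running time, I would combine Proposition~\ref{prop:lip:table:size} with the cost of one join. Proposition~\ref{prop:lip:table:size} gives $n^{\cO(w)}$ indices at each node and $n^{\cO(w)}$ time to enumerate them. For each index $\indexshort_t$, the join tries all pairs $(\indexshort_\childA, \indexshort_\childB)$, of which there are $n^{\cO(w)} \cdot n^{\cO(w)} = n^{\cO(w)}$, and each of the compatibility checks in Steps 0--4 (agreement of $\bds$-pieces across the three sub-cuts, the contraction check producing $\matchinggraph = \bds \odot \matching$, the containment checks among $\mIn{}$ and $\mOut{}$ sets, and the arithmetic $i = i_\childA + i_\childB - \card{V(\bdss)}$) is polynomial in $n$. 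Hence each table value costs $n^{\cO(w)}$ to compute, each node costs $n^{\cO(w)}$ overall, and since $|V(\dectree)| = \cO(n)$, the total running time is $\cO(n) \cdot n^{\cO(w)} = n^{\cO(w)}$.

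I do not foresee a significant obstacle: the heavy lifting has already been done in Propositions~\ref{prop:lip:corr:root}, \ref{prop:lip:table:size}, and \ref{prop:lip:corr:join}, together with the Minimal Vertex Covers Lemma (Corollary~\ref{cor:mvc:lemma}) and the structural bound of Lemma~\ref{lem:disjoint:paths:size}. The only mild care point is to verify that leaf initialization exhausts all partial solutions at a singleton $V_t = \{v\}$, which follows because any nonempty intersection of an induced disjoint union of paths with $\crosg{t}$ at a leaf is either a single edge incident to $v$ or a length-two path centered at $v$, and the empty-intersection case is covered by the two choices $\mvc \in \{\{v\}, N(v)\}$. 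Putting these ingredients together yields the stated theorem.
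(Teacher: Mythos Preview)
Your proposal is correct and matches the paper's own argument essentially line for line: the paper also concludes the theorem by invoking Propositions~\ref{prop:lip:corr:root} and~\ref{prop:lip:corr:join} together with the leaf case for correctness, and Proposition~\ref{prop:lip:table:size} plus the polynomial cost of Steps~0--4 and $|V(\dectree)| = \cO(n)$ for the $n^{\cO(w)}$ running time. There is nothing to add.
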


\newcommand{\terminal}{x}
\newcommand{\terminall}{y}
\newcommand{\terminals}{X}

\newcommand*{\idpsolution}{\mathcal{P}}
\newcommand*{\idpsol}{\idpsolution}

\newcommand*{\labf}{\lambda}
\newcommand*{\alllabf}{\Lambda}

\newcommand{\matchinglab}[1]{Q_{#1}}
\newcommand{\matlab}{\matchinglab{\labf}}

\subsection{Induced Disjoint Paths}\label{sub:sec:alg:IDP}

In this section, we build upon the ideas of the algorithm for \textsc{Longest Induced Path} presented above to obtain an $n^{\cO(w)}$-time algorithm for the following parameterized problem.

\parproblemdef
	{Induced Disjoint Paths (IDP)/Mim-Width}
	{A graph $G$ with branch decomposition $(\dectree, \decf)$ and pairs of vertices $(\terminal_1, \terminall_1)$, $\ldots$, $(\terminal_k, \terminall_k)$ of $G$.}
	{$w \defeq \mimw(\dectree, \decf)$}
	{Does $G$ contain a set of vertex-disjoint induced paths $P_1,\ldots,P_k$, such that for $i \in [k]$, $P_i$ is a path from $\terminal_i$ to $\terminall_i$ and for $i \neq j$, $P_i$ does not contain a vertex adjacent to a vertex in $P_j$?}
	
	Throughout the remainder of this section, we refer to the vertices $\{\terminal_i, \terminall_i\}$, where $i \in [k]$ as the {\em terminals} and we denote the set of all terminals by $\terminals \defeq \bigcup_{i \in [k]} \{\terminal_i, \terminall_i\}$. We furthermore use the following notation: We denote by $\components(G)$ the set of all connected components of $G$ and for a vertex $v \in V(G)$, $\component{G}(v)$ refers to the connected component containing $v$.

We observe how a solution $\idpsol = (\idpsol_1, \ldots, \idpsol_k)$ interacts with the graph $G[V_t \cup \bd(\bar{V_t})]$, for some $t \in V(\dectree)$. In this case, for each $i \in [k]$, $\idpsol_i$ is an $(\terminal_i, \terminall_i)$-path and additionally for $j \neq i$, there is no vertex in $\idpsol_i$ adjacent to a vertex of $\idpsol_j$. The intersection of $\idpsol$ with $G[V_t \cup \bd(\bar{V_t})]$ is a subgraph $\indp = (\indp_1,\ldots, \indp_k)$, where each $\indp_i$ is a (possibly empty) induced disjoint union of paths which is the intersection of the $(\terminal_i, \terminall_i)$-path $\idpsol_i$ with $G[V_t \cup \bd(\bar{V_t})]$. Note that each terminal $v_i \in \{\terminal_i, \terminall_i\}$ that is contained in $V_t \cup \bd(\bar{V_t})$ is also contained in $V(\indp_i)$.

	Again our goal is to bound the number of table entries at each node $t \in V(\dectree)$ by $n^{\cO(w)}$, so we focus on the intersection of $\indp$ with the crossing graph $\crosg{t}$. 
	There are several reasons why $\indp_i$ can have a nonempty intersection with the crossing graph $\crosg{t}$: If precisely one of $\terminal_i$ and $\terminall_i$ is contained in $V_t$, then the path $\idpsol_i$ must cross the boundary of $G_t$.
If both $\terminal_i$ and $\terminall_i$ are contained in $V_t$ ($\bar{V_t}$), yet $\idpsol_i$ uses a vertex of $\bar{V_t}$ ($V_t$), then it crosses the boundary of $G_t$. 
	
	We now turn to the definition of the table indices. Let us first point out what table indices in the resulting algorithm for \textsc{Induced Disjoint Paths} have in common with the indices in the algorithm for \textsc{Longest Induced Path} and we refer to Section \ref{sub:sec:alg:LIP} for the motivation and details. These similarities arise since in both problems, the intersection of a solution with a crossing graph $\crosg{t}$ is an induced disjoint union of paths.
	
	\begin{itemize}
		\item The intersection of $\indp$ with the edges of $\crosg{t}$ is $\bds$, an induced disjoint union of paths where each component contains at least one edge.
		\item $\mvc$ is a minimal vertex cover of $\crosg{t} - V(\bds)$ such that $\mvc \cap V(\bds) = \emptyset$.
	\end{itemize}
	
	The first important observation to be made is that by Lemma \ref{lem:disjoint:paths:size}, the number of components of $\bds$ is linearly bounded in $w$ and hence at most $\cO(w)$ paths of $\idpsol$ can have a nonempty intersection with $\crosg{t}$. We need to store information about which path $\idpsol_i$ (resp., to which $\indp_i$) the components of $\bds$ correspond to. To do so, another part of the index will be a labeling function $\labf \colon \components(\bds) \to [k]$, whose purpose is to indicate that each component $C \in \components(\bds)$ is contained in $\indp_{\labf(C)}$. Remark that we just observed that each such $\labf$ contains at most $\cO(w)$ entries.
	
	Let $i \in [k]$. Again, we need to indicate how (some of) the components of $\bds$ are connected via $\indp_i$ in $G[V_t]$. As before, we do so by considering a pairing of the vertices in $S[V_t]$ that have degree one in $\bds$, however in this case we also have to take into account the labeling function $\labf$. That is, two such vertices $s$ and $t$ can only be paired if they belong to the same induced disjoint union of paths $\indp_i$. 
	
	In accordance with the above discussion, we define the table entries as follows. We let $\dptable[t, (\bds, \mvc, \labf, \matlab)] = 1$ if and only if the following conditions are satisfied. 

\begin{enumerate}

\makeatletter	
\renewcommand\labelenumi{(\roman{enumi})}
\renewcommand\theenumi{\roman{enumi}}
\renewcommand\labelenumii{(\alph{enumii})}
\renewcommand\theenumii{.\alph{enumii}}
\renewcommand\p@enumiii{\theenumi\theenumii}
\makeatother

	\item There is an induced disjoint union of paths $\indp = (\indp_1,\ldots, \indp_k)$ in $G[V_t \cup \bd(\bar{V_t})]$, such that for $i \neq j$, $\indp_i$ does not contain a vertex adjacent (in $G$) to a vertex in $\indp_j$. For each $i \in [k]$, we have that if $v_i \in \{\terminal_i, \terminall_i\} \cap V_t$, then $v_i \in \indp_i$. Furthermore, $v_i$ has degree one in $\indp_i$.\label{idp:table:def:1}
	\item \label{idp:table:def:2}
	\begin{enumerate}
		\item $E(\indp) \cap E(\crosg{t}) = E(\bds)$. \label{idp:table:def:2:bds}
		\item $\labf \colon \components(\bds) \to [k]$ is a labeling function of the connected components of $\bds$, such that for each component $C \in \components(\bds)$, $\labf(C) = i$ if and only if $C \subseteq \indp_i$. \label{idp:table:def:2:labf}
	\end{enumerate}
	\item $\mvc$ is a minimal vertex cover of $\crosg{t} - V(\bds)$ such that $V(\indp) \cap \mvc = \emptyset$. \label{idp:table:def:3}
	\item Let $\degonevert$ denote the set of vertices in $\bds[V_t]$ that have degree one in $\bds$ and let $\terminals_t = \terminals \cap V_t$.
		Then, $\matlab$ is a pairing (i.e.\ a partition into pairs) of the vertices in $\degonevert \symdiff \terminals_t$ with the following properties.\footnote{We denote by `$\symdiff$' the symmetric difference, i.e.\ $\degonevert \symdiff \terminals_t = (\degonevert \cup \terminals_t) \setminus (\degonevert \cap \terminals_t)$. $\matlab$ is a pairing on $\degonevert \symdiff \terminals_t$, since if a terminal $v_i$ is contained in $\degonevert$, it is supposed to be paired `with itself': Since $v_i$ has degree one in $\indp_i$ by (\ref{idp:table:def:1}) and is incident to an edge in $\bds$, $v_i$ cannot be paired with another vertex.} \label{idp:table:def:4}
	\begin{enumerate}
		\item $(\matchingvertex, \matchingvertexx) \in \matlab$ if and only if there is a path from $s$ to $t$ in $\indp[V_t]$. Note that this implies that $(\matchingvertex, \matchingvertexx) \in \matlab$ only if both $\matchingvertex$ and $\matchingvertexx$ belong to the same $\indp_i$ for some $i \in [k]$ and in particular only if $\matchingvertex, \matchingvertexx \in V(\labf^{-1}(i)) \cup \{\terminal_i, \terminall_i\}$.\label{idp:table:def:4:1}
		\item For each $i \in [k]$, $(\terminal_i, \terminall_i) \in \matlab$ only if $\labf^{-1}(i) = \emptyset$, i.e.\ no component of $\bds$ has label $i$.\label{idp:table:def:4:2}
	\end{enumerate}
\end{enumerate}	
	
	We now show that the answer to the problem can be obtained from inspecting the table entries stored in the root of $\dectree$.
	
	\begin{proposition}\label{prop:idp:root:corr}
		$G$ contains a set of vertex-disjoint induced paths $\idpsol = (\idpsol_1, \ldots, \idpsol_k)$, where $\idpsol_i$ is an $(\terminal_i, \terminall_i)$-path for each $i \in [k]$ and for $j \neq i$, no vertex in $\idpsol_i$ is adjacent to a vertex in $\idpsol_j$, if and only if $\dptable[r, (\emptyset, \emptyset, \emptyset, \matching_\emptyset)] = 1$, where $\matching_\emptyset = \{(\terminal_1, \terminall_1), \ldots, (\terminal_k, \terminall_k)\}$.
	\end{proposition}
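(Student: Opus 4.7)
The plan is to mirror the structure of the proof of Proposition~\ref{prop:lip:corr:root}, exploiting the fact that at the root node all boundary data trivializes. The first step is to observe that at the root $r$ we have $V_r = V(G)$ and $\bar{V_r} = \emptyset$, so $\bd(\bar{V_r}) = \emptyset$, $G[V_r \cup \bd(\bar{V_r})] = G$, and the crossing graph $\crosg{r}$ has no edges. Consequently the only possible choices of index at $r$ force $\bds = \emptyset$, $\mvc = \emptyset$, and $\labf$ to be the empty function (since $\components(\bds) = \emptyset$). Moreover $\degonevert = \emptyset$ and $\terminals_r = \terminals$, so $\degonevert \symdiff \terminals_r = \terminals$, which is exactly the ground set on which $\matching_\emptyset$ is a pairing. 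This reduces the statement to checking that the semantic content of condition (\ref{idp:table:def:1}) together with the pairing requirement in (\ref{idp:table:def:4}.a) coincides with the existence of the desired tuple $\idpsol$.

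For the forward direction, given a solution $\idpsol = (\idpsol_1, \ldots, \idpsol_k)$, I would set $\indp \defeq \idpsol$ and verify each numbered condition in turn. Condition (\ref{idp:table:def:1}) is immediate from the hypothesis that the paths are pairwise induced-disjoint and that each terminal $v_i \in \{\terminal_i, \terminall_i\}$ is an endpoint of $\idpsol_i$. Conditions (\ref{idp:table:def:2}), (\ref{idp:table:def:3}) hold vacuously because $\crosg{r}$ has no edges. For (\ref{idp:table:def:4}.a), since for each $i$ the path $\idpsol_i$ connects $\terminal_i$ to $\terminall_i$ inside $\indp[V_r] = \indp$, the pairing $\matching_\emptyset$ satisfies the required biconditional; (\ref{idp:table:def:4}.b) is vacuous because $\labf^{-1}(i) = \emptyset$ for every $i$.

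For the backward direction, suppose $\dptable[r, (\emptyset, \emptyset, \emptyset, \matching_\emptyset)] = 1$ and let $\indp = (\indp_1, \ldots, \indp_k)$ be the witnessing family from (\ref{idp:table:def:1}). By (\ref{idp:table:def:4}.a) applied to each pair $(\terminal_i, \terminall_i) \in \matching_\emptyset$, there is a path from $\terminal_i$ to $\terminall_i$ in $\indp$; since the $\indp_j$'s are vertex-disjoint and both $\terminal_i, \terminall_i \in \indp_i$ by (\ref{idp:table:def:1}), this path lies entirely inside $\indp_i$. Choose $\idpsol_i$ to be this $(\terminal_i, \terminall_i)$-subpath of $\indp_i$. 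The induced-ness of $\idpsol_i$ follows from the fact that $\indp_i$ is an induced disjoint union of paths in $G$, and the non-adjacency condition between $\idpsol_i$ and $\idpsol_j$ for $i \neq j$ follows from the corresponding non-adjacency of $\indp_i$ and $\indp_j$ in (\ref{idp:table:def:1}).

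The only subtlety I anticipate is spelling out carefully that $\idpsol_i$ extracted from $\indp_i$ need not equal $\indp_i$ (the latter could have additional spurious components), but this does not cause any issue: any extra components are simply discarded, and both the induced and pairwise-non-adjacency properties are inherited from supergraphs. This is the only step that requires more than routine verification from the definitions, and it is essentially an observation rather than an obstacle.
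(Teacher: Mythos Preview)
Your proposal is correct and follows essentially the same approach as the paper's proof: both directions exploit that $V_r = V(G)$ forces $\bds$, $\mvc$, and $\labf$ to be empty, reducing everything to conditions (\ref{idp:table:def:1}) and (\ref{idp:table:def:4}). Your treatment of the backward direction is in fact slightly more careful than the paper's, which asserts that each $\indp_i$ \emph{is} an $(\terminal_i,\terminall_i)$-path, whereas you correctly observe that one only gets an $(\terminal_i,\terminall_i)$-path as a component of $\indp_i$ and that any extra components can be harmlessly discarded.
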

	\begin{proof}
		Suppose that $\idpsol = (\idpsol_1, \ldots, \idpsol_k)$ is a solution to \textsc{Induced Disjoint Paths}, i.e.\ $\idpsol$ satisfies the conditions of the proposition. First note that since $V_r = V$ and $\bd(\bar{V_r}) = \emptyset$, $G[V_r \cup \bd(\bar{V_r})] = G$. This implies that $\idpsol$ satisfies the conditions stated in Part (\ref{idp:table:def:1}) of the definition of a table entry being set to $1$. (Clearly, each terminal $v_i \in \{\terminal_i, \terminall_i\}$ is contained in $V_r = V$ and has degree one in $\idpsol_i$.) Since $\bd(\bar{V_r}) = \emptyset$, it follows that $\crosg{r} = \emptyset$ and hence, $\bds = \emptyset$, $\mvc = \emptyset$ and $\labf = \emptyset$. Since each $\idpsol_i$ is an (induced) $(\terminal_i, \terminall_i)$-path and $\matching_\emptyset$ pairs terminals $(\terminal_i, \terminall_i)$ for each $i \in [k]$ by assumption, $\matching_\emptyset$ satisfies Part (\ref{idp:table:def:4}) of the definition of a table entry being set to $1$. Remark that Part (\ref{idp:table:def:4:2}) is satisfied since $\labf = \emptyset$ and hence $\labf^{-1}(i) = \emptyset$ for all $i \in [k]$.
		It follows that $\dptable[r, (\emptyset, \emptyset, \emptyset, \matching_\emptyset)] = 1$.
	
		Now suppose that $\dptable[r, (\emptyset, \emptyset, \emptyset, \matching_\emptyset)] = 1$. Then by Part (\ref{idp:table:def:1}) of the definition of the table entries, there is an induced disjoint union of paths $\indp = (\indp_1, \ldots, \indp_k)$ in $G[V_r \cup \bd(\bar{V_r})] = G$ such that for each $i \neq j$, no vertex in $\indp_i$ is adjacent to a vertex in $\indp_j$. Furthermore, for each $i \in [k]$, $\{\terminal_i, \terminall_i\} \subseteq V(\indp_i)$, and both $\terminal_i$ and $\terminall_i$ have degree one in $\indp_i$. Since $\matching_\emptyset$ pairs all $(\terminal_i, \terminall_i)$, Part (\ref{idp:table:def:4}) allows us to conclude that each $\indp_i$ is an $(\terminal_i, \terminall_i)$-path.
	\end{proof}

	Again, we denote by $\allbdsubsets_t$ the set of all induced unions of paths on at most $4w$ vertices in $\crosg{t}$, for $\bds \in \allbdsubsets_t$ and by $\allminvertexcovers_{t, \bds}$ the set of all minimal vertex covers of $\crosg{t} - V(\bds)$. We let $\alllabf_{\bds}$ denote the set of all labeling functions of the connected components of $\bds$ and for $\labf \in \alllabf_{\bds}$ by $\allmatchings_{t, \bds, \labf}$ the set of all pairings in accordance with Part (\ref{idp:table:def:4}) of the table definition.

	\begin{proposition}\label{prop:idp:table:size}
		For each $t \in V(\dectree)$, there are at most $n^{\cO(w)}$ table entries in $\dptable_t$ and they can be enumerated in time $n^{\cO(w)}$.
	\end{proposition}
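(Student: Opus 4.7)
The plan is to follow the template of Proposition \ref{prop:lip:table:size}: observe that every entry of $\dptable_t$ is indexed by an element of $\allbdsubsets_t \times \allminvertexcovers_{t, \bds} \times \alllabf_{\bds} \times \allmatchings_{t, \bds, \labf}$, and bound each of the four factors by $n^{\cO(w)}$ together with a matching enumeration procedure. First, since $\mimval(V_t)\le w$ and each component of $\bds$ contains an edge, Lemma \ref{lem:disjoint:paths:size} gives $|V(\bds)| \le 4w$, so $|\allbdsubsets_t| \le n^{\cO(w)}$ by iterating over vertex subsets and checking which induce a disjoint union of paths. Second, $|\allminvertexcovers_{t, \bds}| \le n^{\cO(w)}$ together with its enumeration is immediate from Corollary \ref{cor:mvc:lemma}. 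Third, Lemma \ref{lem:disjoint:paths:size} also implies that $\bds$ has at most $\cO(w)$ components (each has at least two vertices), so there are at most $k^{\cO(w)} \le n^{\cO(w)}$ labelings of its components by $[k]$, trivially enumerable in the same time.

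The main obstacle is bounding $|\allmatchings_{t, \bds, \labf}|$, since a priori $\degonevert \symdiff \terminals_t$ could have $\Theta(k)$ vertices once many terminals lie in $V_t$. I would resolve this by using Parts (\ref{idp:table:def:4:1}) and (\ref{idp:table:def:4:2}) of the definition to argue that the pairing is essentially determined on a set of size $\cO(w)$. Partition $\degonevert \symdiff \terminals_t$ by the index $i\in [k]$ each vertex belongs to. For any inactive index $i$ (that is, $\labf^{-1}(i)=\emptyset$), the only admissible partner of $\terminal_i$ or $\terminall_i$ is the other terminal of the same pair by Part (\ref{idp:table:def:4:1}); so if exactly one of $\terminal_i, \terminall_i$ appears in $\degonevert \symdiff \terminals_t$ then no valid pairing exists and $|\allmatchings_{t, \bds, \labf}|=0$, whereas if both appear then $(\terminal_i, \terminall_i)$ is forced. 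For any active index $i$, the vertices of $\degonevert \symdiff \terminals_t$ assigned to $i$ consist of the degree-one endpoints of components labelled $i$ together with at most two terminals; summing over the at most $\cO(w)$ active indices yields a pool of at most $\cO(w)$ vertices lying in non-forced pairs. Since the number of pairings on an $m$-element set is $(m-1)!!$, this contributes a factor $w^{\cO(w)} \le n^{\cO(w)}$, and the pairings can be enumerated in the same time by fixing the forced pairs and then iterating over all pairings within each active-index class.

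Multiplying the four bounds gives at most $n^{\cO(w)}$ table entries, and combining the four enumeration routines yields an overall enumeration in $n^{\cO(w)}$ time, which is the claim.
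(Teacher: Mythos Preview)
Your proposal is correct and follows essentially the same approach as the paper: both bound $\allbdsubsets_t$, $\allminvertexcovers_{t,\bds}$, and $\alllabf_\bds$ exactly as in Proposition~\ref{prop:lip:table:size}, and both handle $\allmatchings_{t,\bds,\labf}$ by observing that for each inactive index $i$ (with $\labf^{-1}(i)=\emptyset$) the pairing of its terminals is forced, leaving only $\cO(w)$ vertices to be freely paired. Your treatment of the pairing bound is in fact slightly more careful than the paper's, which does not explicitly address the sub-case where exactly one of $\terminal_i,\terminall_i$ lies in $V_t$ for an inactive $i$.
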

	\begin{proof}
		By the proof of Proposition \ref{prop:lip:table:size}, we know that $\card{\allbdsubsets_t} \le n^{\cO(w)}$ and $\card{\allminvertexcovers_{t, \bds}} \le \cO(n^{4w})$.  Since Lemma \ref{lem:disjoint:paths:size} also bounds the number of connected components in $\bds \in \allbdsubsets_t$ by $\cO(w)$, we can conclude that $\card{\alllabf_{\bds}} \le k^{\cO(w)}$, since then for each $\labf \in \alllabf_{\bds}$, $\labf^{-1}(i) \neq \emptyset$ for at most $\cO(w)$ values of $i \in [k]$. 
	The same reasoning can be applied to count $\card{\allmatchings_{t, \bds, \labf}}$, since each $j \in [k]$ such that $\labf^{-1}(j) = \emptyset$ allows no further choices for pairings in $\allmatchings_{t, \bds, \labf}$: Either both $\terminal_j$ and $\terminall_j$ are contained in $V_t$ and we pair them, or neither of them is contained in $V_t$ and we disregard them. We can conclude that $\card{\allmatchings_{t, \bds, \labf}} \le w^{\cO(w)}$. To summarize, there are at most
		\[
			\cO(n^{4w}) \cdot n^{\cO(w)} \cdot k^{\cO(w)} \cdot w^{\cO(w)} = n^{\cO(w)}
		\]
	entries in $\dptable_t$ for each $t \in V(\dectree)$. Note that even if $k = \cO(n)$, we obtain a bound of $n^{\cO(w)}$ on the number of table entries stored at each node $t \in V(T)$. The time bound on the enumeration of the indices follows from the same argument given in the proof of Proposition \ref{prop:lip:table:size}.
	\end{proof}
	
	We now explain how the table entries $\dptable_t$ are computed for each $t \in V(\dectree)$.
	
	\textbf{Leaves of $\dectree$.} Let $t \in V(\dectree)$ be a leaf of $\dectree$ and $v = \decf^{-1}(t)$. Any nonempty intersection of an induced disjoint union of paths with $\crosg{t}$ is either a single edge using $v$ or a path on two edges having $v$ as the middle vertex. Hence, all nonempty $\bds$ such that the corresponding table entry is to $1$ are either a single edge using $v$, and we denote this set by $\allbdsubsets_{t, v}^1$ or a path on two edges with $v$ as the middle vertex and we denote this set by $\allbdsubsets_{t, v}^2$.
	
	Suppose $\bds \in \allbdsubsets_{t, v}^1$. Since $\crosg{t} - V(\bds)$ has no edges, $\mvc = \emptyset$. If a solution intersects $\bds$, this means that $v$ is a terminal vertex, i.e.\ $v \in \{\terminal_{i^*}, \terminall_{i^*}\}$ for some $i^* \in [k]$. Hence, (with a slight abuse of notation) $\labf(\bds) = i^*$ and since $v$ is the only degree one vertex in $\bds[V_t]$, $\matlab = \emptyset$.
	
	Now suppose $\bds \in \allbdsubsets_{t, v}^2$. Again, $\crosg{t} - V(\bds)$ has no edges, so $\mvc = \emptyset$. Let $w_1, w_2$ be the vertices in $\bds$ other than $v$. We distinguish the following cases.
	\begin{description}
		\item[Case L1 (no terminal).] If neither of $w_1$ and $w_2$ is a terminal vertex, then $\bds$ can be a subgraph of any $(\terminal_i, \terminall_i)$-path, so we allow all choices of $\labf(\bds) = i$ for $i \in [k]$.
		\item[Case L2 (one terminal).] If precisely one of $w_1$ and $w_2$ is a terminal vertex for some $i^* \in [k]$, then the only choice for $\labf$ is such that $\labf(\bds) = i^*$.
		\item[Case L3 (two terminals, same path).] If $\{w_1, w_2\} = \{\terminal_{i^*}, \terminall_{i^*}\}$ for some $i^* \in [k]$, then the only choice for $\labf$ is such that $\labf(\bds) = i^*$. Note that if $w_1 \in \{\terminal_i, \terminall_i\}$ and $w_2 \in \{\terminal_j, \terminall_j\}$ for $i \neq j$, then $\bds$ is not a valid partial solution.
	\end{description}
	In all of the above cases, there is no degree one vertex in $\bds[V_t]$, so $\matlab = \emptyset$.
	
	Eventually, we have to consider the cases when no edge of $\crosg{t}$ is used in a partial solution. Note that this is only possible if $v$ is not a terminal vertex. The choices for the minimal vertex covers are then $\{v\}$ and $N(v)$ and since $\bds = \emptyset$, it follows that $\labf = \emptyset$ and $\matlab = \emptyset$.
	
	To summarize, we set the table entries in $\dptable_t$ as follows.
	\begin{align*}
		\dptable[t, (\bds, \mvc, \labf, \matlab)] = 
			\left\{\begin{array}{ll}
				1, &\mbox{if } \bds \in \allbdsubsets_{t, v}^1, \mvc = \emptyset, \labf(\bds) = i^*, \matlab = \emptyset \mbox{ and } v \in \{\terminal_{i^*}, \terminall_{i^*}\} \\
				1, &\mbox{if } \bds \in \allbdsubsets_{t, v}^2, \mvc = \emptyset, \labf(\bds) = i \mbox{ for some } i \in [k], \matlab = \emptyset \\
					&\mbox{and } \{w_1, w_2\} \cap \terminals = \emptyset \mbox{ (Case L1)} \\
				1, &\mbox{if } \bds \in \allbdsubsets_{t, v}^2, \mvc = \emptyset, \labf(\bds) = i^*, \matlab = \emptyset \\
					&\mbox{and } w_{j_1} \in \{\terminal_{i^*}, \terminall_{i^*}\}, w_{j_2} \notin \terminals \mbox{ where } \{j_1, j_2\} = [2] \mbox{ (Case L2)} \\
				1, &\mbox{if } \bds \in \allbdsubsets_{t, v}^2, \mvc = \emptyset, \labf(\bds) = i^*, \matlab = \emptyset \\
					&\mbox{and } \{w_1, w_2\} = \{\terminal_{i^*}, \terminall_{i^*}\} \mbox{ (Case L3)} \\
				1, &\mbox{if } \bds = \emptyset, \mvc \in \{\{v\}, N(v)\}, \labf = \emptyset, \matlab = \emptyset \mbox{ and } v \notin \terminals \\
				0, &\mbox{otherwise}
			\end{array}\right.		
	\end{align*}
	
	\textbf{Internal nodes of $\dectree$.} We argue that for each internal node $t \in V(\dectree)$ and each index $\indexshort_t \defeq (\bds, \mvc, \labf, \matlab) \in \allbdsubsets_t \times \allminvertexcovers_{t, \bds} \times \alllabf_{\bds} \times \allmatchings_{t, \bds, \labf}$, the value of the table entry $\dptable[\indexshort_t]$ can be computed in complete analogy with the algorithm for \textsc{Longest Induced Path} (LIP).  In particular, it can be viewed as performing for each $i \in [k]$ the steps presented in the algorithm of LIP. Some attention must be devoted to guaranteeing that there is no edge between two components labeled $i$ and $j$ (where $i \neq j$). However, as outlined below, partial solutions to the two problems represented by their respective indices have almost exactly the same structure and in that sense, the routine presented for LIP already captures this requirement. More precisely, the only way in which unwanted edges could be added during the join in \textsc{Induced Disjoint Paths} (IDP) is if they cross the boundary. By definition, there will never be any unwanted edges in the intersection of a partial solution with the boundary in both problems. Subsequently, the only place where they can appear is between intermediate vertices of a combined partial solution. In the join of LIP, we explicitly ensure that no edges between intermediate vertices appear (via the minimal vertex cover) and in the same way we can enforce in IDP that no edges appear between intermediate vertices of components labeled $i$ and $j$ (for $i \neq j$) when combining two partial solutions.
	
	We now argue in detail the similarity between the partial solution based on the definitions of the table entries for the respective problems. Throughout the following, we refer to the conditions stated in the definition of the table entries for \textsc{Longest Induced Path} by LIP($\cdot$) and for \textsc{Induced Disjoint Paths} by IDP($\cdot$).
	
	First, observe that in both cases, the intersection of a solution with the graph $G[V_t \cup \bd(\bar{V_t})]$ is an induced disjoint union of paths $\indp$. The key difference between the two problems is that in LIP we are interested in the solution size, whereas in IDP we are not and that in IDP we additionally have to take	into account to which $(\terminal_i, \terminall_i)$-path the components of $\indp$ belong. 
	Note that aside from the size constraint, LIP(\ref{lip:table:def:1}) and IDP(\ref{idp:table:def:1}) express precisely the same thing if $k=1$ in IDP and the entry $j$ in LIP takes the role of the fixed terminals in IDP.
	
	LIP(\ref{lip:table:def:2}) and IDP(\ref{idp:table:def:2:bds}) are in fact identical, in particular this means that the intersection $\bds$ of a partial solution with the crossing graph $\crosg{t}$ is the same in both problems. 
	 So to make the join procedure described for LIP work for IDP in a way that it preserves Condition IDP(\ref{idp:table:def:2}), we only have to take into account the behavior of the labeling function $\labf$ introduced in IDP(\ref{idp:table:def:2:labf}). Note that again, if $k =1$ in IDP, then LIP(\ref{lip:table:def:2}) and IDP(\ref{idp:table:def:2}) express precisely the same thing. 
	 LIP(\ref{lip:table:def:3}) and IDP(\ref{idp:table:def:3}) are as well identical, so all parts of the algorithm for LIP that ensure that this condition holds can immediately applied (and argued for) in the join of IDP.
	 
	 It remains to argue the similarity of LIP(\ref{lip:table:def:4}) and IDP(\ref{idp:table:def:4}). Since we do not know the endpoints of the induced path we are looking for in LIP, whereas in IDP we do, slightly differing languages were used to express the properties of the pairings. In LIP(\ref{lip:table:def:4}), two vertices are paired if contracting all edges in $\indp - E(\crosg{t})$ leaves an edge between the two vertices in the graph. But this happens precisely when there is a path between these two vertices in $\indp[V_t]$, the condition stated in IDP(\ref{idp:table:def:4:1}).
	The remainder of IDP(\ref{idp:table:def:4}) is devoted to including the terminals in the pairings, so the necessary modifications in the algorithm of LIP to work for IDP such that it respects IDP(\ref{idp:table:def:4}) are straightforward. Note that again if $k = 1$ in IDP, LIP(\ref{lip:table:def:4}) and IDP(\ref{idp:table:def:4}) express the same thing where $j$ takes the role of the terminals contained in $V_t$.
	
	By the above discussion, the fact that we enumerate all possible solutions in the leaves of $\dectree$, Propositions \ref{prop:idp:root:corr} and \ref{prop:idp:table:size} and in the light of the correctness (and runtime) of the join of the algorithm for \textsc{Longest Induced Path}, we have the following theorem.
	
\begin{theorem}\label{thm:induced:disjoint:paths:mim}
	There is an algorithm that given a graph $G$ on $n$ vertices, pairs of terminal vertices $(\terminal_1, \terminall_1), \ldots, (\terminal_k, \terminall_k)$ and a branch decomposition $(\dectree, \decf)$ of $G$, solves \textsc{Induced Disjoint Paths} in time $n^{\cO(w)}$, where $w$ denotes the mim-width of $(\dectree, \decf)$.
\end{theorem}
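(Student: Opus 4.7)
The plan is to follow the same bottom-up dynamic programming template as for \textsc{Longest Induced Path}, filling $\dptable_t[\indexshort_t]$ in postorder over $(\dectree, \decf)$. The leaf case is already specified by the case analysis preceding the theorem, and Proposition~\ref{prop:idp:root:corr} guarantees that reading the entry $\dptable[r, (\emptyset, \emptyset, \emptyset, \matching_\emptyset)]$ yields the answer. Thus the only remaining work is to describe and justify the join at an internal node $t$ with children $\childA$ and $\childB$, and then to assemble the running-time estimate.

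For the join I would enumerate all pairs of indices $\indexshort_\childA = (\bds_\childA, \mvc_\childA, \labf_\childA, \matlab_\childA)$ and $\indexshort_\childB = (\bds_\childB, \mvc_\childB, \labf_\childB, \matlab_\childB)$ with $\dptable_\childA[\indexshort_\childA] = \dptable_\childB[\indexshort_\childB] = 1$, and accept the pair for $\indexshort_t = (\bds, \mvc, \labf, \matlab)$ exactly when the five conditions of the \textsc{LIP} join (validity, compatibility of $\bds$, pairings, minimal vertex covers, and counter update) are satisfied, amended as follows. In Step~1 I add the constraint that $\labf_\childA$ and $\labf_\childB$ agree on every component of $\bdss \defeq \bds_\childA \cap \crosgAB{\childA}{\childB} = \bds_\childB \cap \crosgAB{\childA}{\childB}$, and that after identifying components joined through edges of $\bdss$ or through pairs of $\matlab_\childA \cup \matlab_\childB$, the resulting labels match $\labf$ on $\components(\bds)$. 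In Step~2 I require that $\bds \odot \matlab$ equals the graph $\matchinggraph$ obtained by contracting $\bdss$ and all pairings incident to vertices outside $S$ in the union of $\bds_\childA, \bds_\childB, \matlab_\childA, \matlab_\childB$, and that a vertex $v_i \in \terminals_t \setminus \degonevert$ occurs in no pair of $\matlab_\childA$ or $\matlab_\childB$ (since by IDP(\ref{idp:table:def:1}) it must already be the degree-one endpoint of $\indp_i$ inside $V_t$). Step~3 is kept verbatim: IDP(\ref{idp:table:def:3}) is identical to LIP(\ref{lip:table:def:3}), so the same four-stage construction of $\mvc_\childA, \mvc_\childB$ from $\mvc$ as in Proposition~\ref{prop:lip:corr:join} applies. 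Step~4 is dropped (no size counter needed) and is replaced by a check of IDP(\ref{idp:table:def:4:2}): if $(\terminal_i, \terminall_i) \in \matlab$, verify $\labf^{-1}(i) = \emptyset$ in the composed label.

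The main obstacle I foresee is propagating the labeling correctly across the cut: we must verify both that every component of the merged structure receives a single well-defined label, and that no unwanted edge of $G$ joins a vertex of $\indp_i$ to a vertex of $\indp_j$ when $i \neq j$. The latter splits into the familiar cases handled in Step~3 of \textsc{LIP} (edges between two intermediate vertices across either $\crosgAB{\childA}{\childB}$ or $\crosgAB{\childA}{\bar{t}}$, and edges between an intermediate vertex and a vertex of $\bds_\childA \cup \bds_\childB$), all of which are already blocked by the minimal-vertex-cover conditions, which do not distinguish whether the two paths have the same label or different labels. The former is handled by the label-consistency check added to Step~1. Correctness then follows by induction on $\dectree$: the leaves exhaustively enumerate all possible intersections of a partial solution with a single-vertex cut, and the join precisely realizes the table definition at $t$ from the table definitions at $\childA$ and $\childB$, in complete analogy with Proposition~\ref{prop:lip:corr:join} but with the labeling bookkeeping carried along.

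For the running time, Proposition~\ref{prop:idp:table:size} gives at most $n^{\cO(w)}$ table entries at each node of $\dectree$, and they can be enumerated in the same time. Computing one entry requires iterating over $n^{\cO(w)} \cdot n^{\cO(w)}$ ordered pairs of child indices and performing polynomial-time compatibility checks on each, for a per-node cost of $n^{\cO(w)}$. Since $\card{V(\dectree)} = \cO(n)$, the total running time is $n^{\cO(w)}$, establishing the theorem.
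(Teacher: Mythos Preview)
Your proposal is correct and follows essentially the same approach as the paper: reuse the \textsc{LIP} join verbatim for the minimal-vertex-cover conditions (since IDP(\ref{idp:table:def:3}) coincides with LIP(\ref{lip:table:def:3})), drop the size counter, and add bookkeeping for the labeling $\labf$ and the terminal-aware pairing $\matlab$. If anything, you are more explicit than the paper about which steps need amending; the paper argues the join ``in complete analogy'' by matching each IDP condition to its LIP counterpart and observing that the vertex-cover machinery already blocks cross-label edges.
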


\subsection{$H$-Induced Topological Minor}\label{sub:sec:alg:ITM}
Let $G$ be a graph and $uv \in E(G)$. We call the operation of replacing the edge $uv$ by a new vertex $x$ and edges $ux$ and $xv$ the {\em edge subdivision} of $uv$. We call a graph $H$ a {\em subdivision} of $G$ if it can be obtained from $G$ by a series of edge subdivisions. We call $H$ an {\em induced topological minor} of $G$ if a subdivision of $H$ is isomorphic to an induced subgraph of $G$.
\parproblemdef
	{$H$-Induced Topological Minor/Mim-Width}
	{A graph $G$ with branch decomposition $(\dectree, \decf)$}
	{$w \defeq \mimw(\dectree, \decf)$}
	{Does $G$ contain $H$ as an induced topological minor?}

\begin{theorem}
	There is an algorithm that given a graph $G$ on $n$ vertices and a branch decomposition $(\dectree, \decf)$ of $G$, solves \textsc{$H$-Induced Topological Minor} in time $n^{\cO(w)}$, where $H$ is a fixed graph and $w$ the mim-width of $(\dectree, \decf)$.
	\end{theorem}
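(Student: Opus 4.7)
The plan is to reduce \textsc{$H$-Induced Topological Minor} to polynomially many instances of \textsc{Induced Disjoint Paths} on auxiliary graphs whose mim-width is still $\cO(w)$, and then invoke Theorem~\ref{thm:induced:disjoint:paths:mim}. Throughout, let $h \defeq \card{V(H)}$, which is a constant since $H$ is fixed.

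First, I would enumerate all injective maps $\phi \colon V(H) \to V(G)$; these represent candidate choices of branch vertices of a subdivision of $H$ in $G$, and there are at most $n^h = n^{\cO(1)}$ of them. For each $\phi$, I would discard it if some non-edge $uv \notin E(H)$ has $\phi(u)\phi(v) \in E(G)$, since an induced subdivision forbids extra edges among branch vertices. For each surviving $\phi$, I would build an auxiliary graph $G_\phi$ by splitting each branch vertex: remove $\phi(v)$ from $G$ and replace it with $\deg_H(v)$ pairwise non-adjacent copies $\phi(v)_e$, one per edge $e$ of $H$ incident to $v$. Each copy inherits the neighbors of $\phi(v)$ in $V(G) \setminus \phi(V(H))$, and for each $uv \in E(H)$ with $\phi(u)\phi(v) \in E(G)$ we add the edge $\phi(u)_{uv}\phi(v)_{uv}$. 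The IDP instance takes the terminal pairs $\{(\phi(u)_{uv},\phi(v)_{uv}) : uv \in E(H)\}$. A straightforward check shows that an induced topological minor of $H$ in $G$ with branch vertices $\phi(V(H))$ corresponds bijectively to a collection of induced disjoint paths connecting these terminal pairs in $G_\phi$, so $H$ is an induced topological minor of $G$ if and only if the IDP instance is a yes-instance for some $\phi$.

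To apply Theorem~\ref{thm:induced:disjoint:paths:mim} I need a branch decomposition of $G_\phi$ of mim-width $\cO(w)$. I would build one from $(\dectree, \decf)$ by replacing, for each $v \in V(H)$, the leaf $\decf(\phi(v))$ with a small subcubic subtree whose leaves are in bijection with the copies of $\phi(v)$. The main (and really only nontrivial) obstacle is bounding the mim-width of this modified decomposition. For cuts internal to the new subtrees, both sides see the same regular-neighborhood of $\phi(v)$, so any induced matching across such a cut has at most one edge. For cuts inherited from $\dectree$, let $M'$ be an induced matching in the corresponding crossing graph of $G_\phi$. I would split $M'$ into: (i) edges touching no copy, which project to an induced matching in the corresponding crossing graph of $G$ and so contribute at most $w$; (ii) ``copy-to-regular'' edges, bounded by $h$ since two such edges incident to copies of the same $\phi(v)$ would share regular endpoints and fail to be induced; and (iii) ``copy-to-copy'' edges, bounded by $\card{E(H)}$ since each is determined by its corresponding edge of $H$. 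This gives $\card{M'} \le w + \cO(1) = \cO(w)$.

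Combining $n^{\cO(1)}$ choices of $\phi$ with a single $n^{\cO(w)}$-time call to IDP per choice yields total runtime $n^{\cO(w)}$, as claimed.
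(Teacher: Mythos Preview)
Your reduction to \textsc{Induced Disjoint Paths} is broken. In $G_\phi$ every copy $\phi(v)_e$ of a branch vertex inherits the \emph{entire} neighbourhood of $\phi(v)$ among non-branch vertices. Hence, for two edges $e,e'$ of $H$ incident to the same $v$, the terminal $\phi(v)_{e'}$ is adjacent to the second vertex of the path $P_e$ (the neighbour of $\phi(v)$ that $P_e$ uses right after $\phi(v)_e$). This already violates the IDP requirement that no vertex of $P_{e'}$ be adjacent to a vertex of $P_e$. Concretely, let $H$ be a path on three vertices with centre $u$, let $G$ be the path on $a,b,c,d,e$ in this order, and set $\phi(u)=c$: then $G$ is itself a subdivision of $H$, yet in $G_\phi$ the two copies $c_1,c_2$ of $c$ are both adjacent to $b$ and to $d$, so the only candidate paths $c_1ba$ and $c_2de$ are not mutually non-adjacent and the IDP instance has no solution. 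The ``straightforward check'' therefore fails in the forward direction.

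The paper sidesteps this by additionally guessing, for every edge $xy\in E(H)$, the neighbour $v_{x,y}$ of the branch vertex $v_x$ that the corresponding path uses first; it then deletes the branch vertices and runs IDP on the induced subgraph $G-X$ with terminal pairs $(v_{x,y},v_{y,x})$. This costs an extra $n^{\cO(\lvert E(H)\rvert)}=n^{\cO(1)}$ factor in the branching, but because the IDP instance lives in an induced subgraph of $G$, the given branch decomposition restricts directly and no separate mim-width argument is needed at all. Your splitting idea can be repaired along the same lines: give each copy $\phi(v)_e$ only a single guessed neighbour instead of the full neighbourhood of $\phi(v)$.
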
	
	\begin{proof}
	Let $H$ be a fixed graph. To solve \textsc{$H$-Induced Topological Minor}, we have to find a map $\varphi$ from the vertices of $H$ to a set of vertices in $G$ such that there is an edge in $\{x, y\}$ if and only if there is an induced path between $\varphi(x)$ and $\varphi(y)$ in $G$ such that additionally, for each pair $P_1$, $P_2$ of such paths, no vertex in $P_1$ is adjacent to a vertex in $P_2$. We do so by guessing to which vertices in $G$ the vertices of $H$ are mapped and after some preprocessing, we run the algorithm for \textsc{Induced Disjoint Paths} with (at most) $\card{E(H)}$ pairs of terminals to find the induced paths in $G$ corresponding to the edges in $H$.
	
	\begin{figure}
		\centering
		\includegraphics[height=.15\textheight]{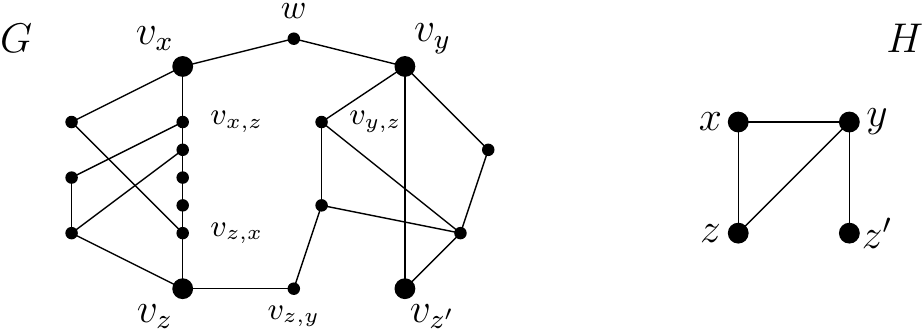}
		\caption{An example \textsc{Yes}-instance of \textsc{$H$-Induced Topological Minor}. Note that $v_{y, z'} = v_{z'}$ and $v_{z', y} = v_y$ since $\{v_y, v_{z'}\} \in E(G)$ in accordance with Step 2(\ref{hitm:step:2:2}). For each of the remaining edges in $H$ there are precisely two corresponding edges each in $G$, illustrating Step 2(\ref{hitm:step2:1}). Furthermore, $w = v_{x, y} = v_{y, x}$, illustrating the special case dealt with in Step 2(\ref{hitm:step:2:3}).}
		\label{fig:hitm:ex}
	\end{figure}	
	
	\begin{description}
		\item[Step 1 (Branching).] For each vertex $x \in V(H)$, we guess the corresponding vertex $v_x \in V(G)$ of $x$, i.e.\ $\varphi(x) = v_x$. Hence we require that for all $x, y \in V(H)$, $x = y \Leftrightarrow v_x = v_y$ and that $\deg_G(v_x) \ge \deg_H(x)$.
		We let $X \defeq \bigcup_{x \in V(H)} v_x$. Furthermore, for each $x \in V(H)$ and neighbor $y$ of $x$, we guess a distinct corresponding neighbor $v_{x, y}$ in $G$. Note that since $\deg_G(v_x) \ge \deg_H(x)$ for all $x \in V(H)$, there are sufficiently many such vertices in $G$ to choose from.
		The vertex $v_{x, y}$ is the vertex adjacent to $v_x$ in the induced path in $G$ corresponding to the edge $\{x, y\}$ in $H$. We let $Y_x \defeq \bigcup_{y \in N_H(x)} v_{x, y}$ and $Y \defeq \bigcup_{x \in V(H)} Y_x$ and we refer to the vertices in $Y$ as the {\em neighbor vertices}. We branch on each such choice of $X \cup Y$.		
		
	\item[Step 2 (Preprocessing according to $X \cup Y$).] In this step we check whether the current choice of $X$ and $Y$ is valid and prepare the current instance for the call to the algorithm of \textsc{Induced Disjoint Paths}.
		\begin{enumerate}[(a)]
			\item For each edge $\{x, y\} \in E(H)$, if $\{v_x, v_y\} \in E(G)$, then we remove the edge $\{x, y\}$ from $H$ without changing the answer to the problem: $\{v_x, v_y\}$ is the induced path in $G$ corresponding to the edge $\{x, y\}$. 
				If on the other hand, $\{x, y\} \notin E(H)$ but $\{v_x, v_y\} \in E(G)$, then we discard this choice of $X \cup Y$. \label{hitm:step:2:2}
			\item We check whether for each edge $\{x, y\} \in E(H)$ there are precisely two corresponding edges, namely $\{v_x, v_{x, y}\}$ and $\{v_y, v_{y, x}\}$ in $G$ and that $G[X \cup Y]$ induces no further edges. In particular, if $G[X \cup Y]$ does contain any additional edges, we discard this choice of $X$ and $Y$. \label{hitm:step2:1}
			\item If there is a vertex $w \in V(G)$ which is chosen more than once as a neighbor vertex, i.e.\ there exists a set $A \subseteq V(H)$ such that $w \in \bigcap_{a \in A} Y_a$ with $|A| > 1$, then we proceed only if $A = \{x, y\}$ for some $x, y \in V(H)$ and such that $w = v_{x, y} = v_{y, x}$. In that case, we remove the edge $\{x, y\}$ from $H$ without changing the answer to the problem: The set $\{v_x, w, v_y\}$ induces the path in $G$ corresponding to the edge $\{x, y\}$. Furthermore, we remove the vertices in $N(w) \setminus X$ from $G$, since these vertices cannot be used by any other path corresponding to an edge in $H$. \label{hitm:step:2:3}
		\end{enumerate}
		
		\item[Step 3 (Execution of IDP-algorithm).] We run the algorithm for \textsc{Induced Disjoint paths} on $G - X$ with pairs of terminals $(v_{x, y}, v_{y, x})_{\{x, y\} \in E(H)}$. (Note that some edges of $H$ might have been removed in Steps 2(\ref{hitm:step:2:2}) and 2(\ref{hitm:step:2:3}) and that some vertices of $G$ might have been removed in Step 2(\ref{hitm:step:2:3}).)
		
		\item[Step 4 (Return).] We let the algorithm return \textsc{Yes} if at least one run of \textsc{Induced Disjoint Paths} in Step 3 returned \textsc{Yes}, and \textsc{No} otherwise.
	\end{description}
	
	We now analyze the runtime of the algorithm. In Step 1, we branch in $n^{\cO(\card{E(H)})}$ ways, the number of choices for the sets $X$ and $Y$. The checks in Step 2 can be performed in time polynomial in $n$ and each execution of the algorithm for \textsc{Induced Disjoint Paths} in Step 3 takes time $n^{\cO(w)}$ by Theorem \ref{thm:induced:disjoint:paths:mim}. So the total runtime of the algorithm is $n^{\cO(\card{E(H)})} \cdot n^{\cO(w)} = n^{\cO(w)}$, as $H$ is fixed.
	\end{proof}


\bibliography{fvsmim}
\end{document}